\documentclass[sigconf,screen,nonacm]{acmart}

\AtBeginDocument{%
  }

\copyrightyear{2026}
\acmYear{2026}
\setcopyright{cc}
\setcctype{by-nc-nd}
\acmConference[ApPLIED'26]{Advanced Tools, Programming Languages, and PLatforms for Implementing and Evaluating algorithms for Distributed systems }{July 06--10, 2026}{Egham, United Kingdom}
\acmBooktitle{Advanced Tools, Programming Languages, and PLatforms for Implementing and Evaluating algorithms for Distributed systems (ApPLIED'26), July 06--10, 2026, Egham, United Kingdom}
\acmDOI{10.1145/3820355.3820388}
\acmISBN{979-8-4007-2462-6/2026/07}

\usepackage[vlined,ruled,linesnumbered,noresetcount]{algorithm2e}
\usepackage[noend]{algpseudocode}
\usepackage{newfloat}
\usepackage[inline]{enumitem}
\usepackage{color,soul}
\usepackage{multicol}
\usepackage{tabularx, booktabs}
\usepackage{cleveref}

\newtheorem{theorem}{Theorem}[section]
\newtheorem{lemma}{Lemma}[section]

\usepackage[ruled,vlined]{algorithm2e}
\usepackage{tikz}
\usetikzlibrary{arrows.meta}
\tikzset{
    timeline/.style={thin, gray!50},
    timeline dashed/.style={thin, gray!50, dashed},
    msg/.style={-{Stealth[length=4pt, width=3pt]}, semithick},
    msg lost/.style={-{Stealth[length=4pt, width=3pt]}, semithick, dashed, opacity=0.4},
    ev/.style={circle, fill, inner sep=1.2pt},
    plabel/.style={font=\small\bfseries, anchor=east},
    mlabel/.style={font=\scriptsize},
}

\SetCommentSty{mycommfont}

\makeatletter
\newcommand{\removelatexerror}{\let\@latex@error\@gobble}
\makeatother

\SetAlgoCaptionSeparator{.}
\DeclareFloatingEnvironment[placement={!ht}]{myalgo}
\DontPrintSemicolon

\newcommand{\bg}[1]{\left< #1 \right>}

\newcommand{\bc}[1]{\left\{ #1 \right\}}
\newcommand{\goesto}{\rightarrow}

\newcommand{\remove}[1]{}

\newcommand{\removeinfinal}[1]{#1}

\SetKwFunction{RelBlockOp}{releaseBlock}

\SetKwFunction{LE}{LeaderElect}
\SetKwFunction{NDecide}{NameDecide}

\SetKwFunction{LeaderElect}{LeaderElect}
\SetKwFunction{OpRead}{Read} \SetKwFunction{OpWrite}{Write}
\SetKwFunction{TAS}{test-and-set} \SetKwFunction{Reset}{reset}
\SetKwFunction{FAI}{fetch-and-increment}
\SetKwFunction{CAS}{CAS}
\SetKwFunction{CFG}{compare\_and\_fg}

\SetKwData{OK}{OK}
\SetKw{Wait}{await} \SetKwData{PID}{PID} \SetKwData{In}{in}
\SetKwData{win}{win} \SetKwData{lose}{lose}
\SetKwData{True}{true} \SetKwData{False}{false}
\SetKwFunction{Signal}{signal} \SetKwFunction{WaitSignal}{wait}
\SetKwFunction{WaitAny}{wait-any}
\SetKwFunction{InitOp}{init}

\SetKwFunction{AlgoBrack}{()}
\SetKwFunction{AlgoBrackL}{(}
\SetKwFunction{AlgoBrackR}{)}

\SetKwInput{KwPrivVar}{Private variables}
\SetKwInput{KwSharVar}{Shared variables}
\SetKwInput{KwHelpFun}{Helper functions}
\SetKwInput{KwPrecon}{Precondition}

\newcommand{\States}{\ensuremath{\mathcal{S}}}
\newcommand{\sinit}{\ensuremath{s_{init}}}
\newcommand{\Ops}{\ensuremath{\mathcal{O}}}
\newcommand{\Resps}{\ensuremath{\mathcal{R}}}
\newcommand{\Procs}{\ensuremath{\mathcal{P}}}

\newcommand{\Typ}{\ensuremath{\tau}}

\SetKwFor{ForEachP}{foreach}{do in parallel}{end parallel}
\SetKwRepeat{RepeatF}{loop forever}{end}

\SetKwFunction{NewBlockOp}{AllocNewBlock}
\SetKwFunction{SetBlockOp}{chngCurBlock}
\SetKwFunction{GetBlockOp}{getCurBlock}
\SetKwFunction{LL}{LL}
\SetKwFunction{SC}{SC}
\SetKwFunction{ECAS}{ECAS}
\SetKwFunction{OpLock}{Enter}
\SetKwFunction{OpUnlock}{Exit}

\SetKwFunction{HelpBegin}{HelperBegin}
\SetKwFunction{HelpEnd}{HelperEnd}
\SetKwFunction{HelpSucc}{HelperCC}
\SetKwFunction{HelpSuccSubA}{TryToReuseBlock}
\SetKwFunction{HelpSuccSubB}{DoneReusingBlock}

\SetKwFunction{Pent}{Pseudo-Enter}
\SetKwFunction{Pext}{Pseudo-Exit}

\removeinfinal{
\pagenumbering{arabic}
}

\begin{document}

\title{Analyzing Linearizability in Relativistic Distributed Systems}

\author{Kahbod Aeini}
\email{kaeini@uwaterloo.ca}
\orcid{0009-0004-9161-4792}
\affiliation{
  \institution{University of Waterloo}
  \department{Electrical and Computer Engineering}
  \city{Waterloo}
  \country{Canada}
}
\author{Wojciech Golab}
\email{wgolab@uwaterloo.ca}
\orcid{0000-0002-8891-256X}
\affiliation{
  \institution{University of Waterloo}
  \department{Electrical and Computer Engineering}
  \city{Waterloo}
  \country{Canada}
}

\renewcommand{\shortauthors}{Aeini and Golab}

\begin{abstract}
Einstein's theory of relativity correctly predicted that time is relative, and subject to both kinematic and gravitational dilation.
Therefore, executions of distributed systems cannot always be modeled as sequences of events totally ordered according to wall clock time.
To address this fundamental problem, Gilbert and Golab formulated a generalization of Herlihy and Wing's linearizability property for shared objects,
	which they called \emph{relativistic linearizability}, and introduced a collection of theoretical tools to facilitate rigorous analysis.
While they conjectured that several widely-studied classically linearizable algorithms are also relativistically linearizable, their work
	stopped short of presenting formal proofs of correctness, as pointed out recently by Jayanti.
In this paper, we explain how Gilbert and Golab's techniques can be used to establish relativistic linearizability for
	a replicated state machine, as well as variations of the
	widely studied read/write register construction of Attiya, Bar-Noy and Dolev (ABD).
Our results establish a stronger form of relativistic linearizability than Jayanti's central theorem for these asynchronous algorithms.
\end{abstract}

\begin{CCSXML}
	<ccs2012>
	<concept>
	<concept_id>10003752.10003753.10003761.10003763</concept_id>
	<concept_desc>Theory of computation~Distributed computing models</concept_desc>
	<concept_significance>500</concept_significance>
	</concept>
	<concept>
	<concept_id>10011007.10010940.10010971.10011120</concept_id>
	<concept_desc>Software and its engineering~Distributed systems organizing principles</concept_desc>
	<concept_significance>500</concept_significance>
	</concept>
	<concept>
	<concept_id>10010405.10010432.10010441</concept_id>
	<concept_desc>Applied computing~Physics</concept_desc>
	<concept_significance>500</concept_significance>
	</concept>
	</ccs2012>
\end{CCSXML}

\ccsdesc[500]{Theory of computation~Distributed computing models}
\ccsdesc[500]{Software and its engineering~Distributed systems organizing principles}
\ccsdesc[500]{Applied computing~Physics}

\keywords{linearizability; consistency; inter-planetary computing; relativity; simultaneity; time dilation}


\maketitle

\section{Introduction} \label{sec:intro}

As humanity continues to expand scientific activity beyond Earth, distributed computing will play an increasingly important role in enabling reliable communication, coordination, and autonomy among spacecraft, satellites, robotic explorers, and, ultimately, human outposts.
The extreme physical conditions of space fundamentally challenge conventional assumptions underlying terrestrial distributed systems.
In particular, the strange predictions of Einstein’s theory of relativity become readily observable due to the vast distances involved, the high relative velocities of communicating entities,
	and heterogeneous gravitational fields.
These factors give rise to relativistic effects such as time dilation and length contraction, which directly impact how time is perceived and measured
	from the reference frame of different system components.

Because relativity predicts that time and simultaneity are relative,
	reasoning about collections of interdependent events becomes substantially more complex.
Consequently, the classical Newtonian model of distributed system executions---where events are totally ordered in real time---no longer
	yields a canonical representation that can be agreed upon by all possible observers.
This is not a far-fetched, science-fiction concern. Hafele and Keating's prediction \cite{Hafele:Predicted} and famous experiment \cite{Hafele:Observed} vividly illustrated this phenomenon as early as the 1970s, without ever leaving the planet: two atomic clocks were flown around the world in opposite directions and compared to a reference clock on the ground, yielding measurable differences in elapsed time due to relativistic effects. If commercial aircraft suffice to produce such discrepancies, it is hardly a stretch to imagine the components of a distributed system finding themselves in comparable or more extreme situations, whether on satellites in distinct orbits, on different planets, or aboard spacecraft moving at high relative velocities. In such settings, the clocks of these components experience time dilation and accumulate discrepancies in local time, challenging the assumption of a shared notion of time that underlies the design and analysis of distributed algorithms.
This, in turn, necessitates careful rethinking of long-cherished concepts such as Herlihy and Wing's linearizability property \cite{her:lin}, which
	states that operations on a shared object appear to take effect in some linear order consistent with their precedence in real time.

To address this fundamental problem, Gilbert and Golab \cite{gilgol} formulated \emph{relativistic linearizability}---a generalization of classic
	(i.e., Herlihy and Wing's) linearizability to executions where events are only partially ordered in a physical sense.
Furthermore, they introduced a collection of theoretical tools to facilitate rigorous analysis of shared object implementations.
While they conjectured that several widely-studied classically linearizable algorithms are also relativistically linearizable,
	and moreover that this point could be proved using their specific technique, their foundational
	work stopped short of presenting formal proofs of correctness.
In response, Jayanti \cite{jayanti} recently proved that all classically linearizable asynchronous algorithms are
	in fact relativistically linearizable.
Intuitively, asynchronous algorithms are immune to the effects of time dilation as they rely only on physical causality
	(by way of program order and message exchange) for coordination,
	which is invariant across different frames of reference.

While it may appear on first impression that Jayanti's theorem confirms Gilbert and Golab's conjecture,
	the two works in fact refer to distinct notions of relativistic linearizability.
Gilbert and Golab formulate three such notions: R1-linearizability as the property that an execution appears linearizable in \emph{some} frame of reference,
	R2-linearizability as a special case of R1 where the execution appears linearizable in \emph{every} frame of reference,
	and R3-linearizability as a special case of R2 where a \emph{single} linearization applies in all frames of reference.
Their conjecture is stated specifically with respect to R3 (which implies R2 and R1), whereas Jayanti's definition
	of relativistic linearizability is equivalent under general relativity to the (strictly) weaker R2 property.
As a result, Jayanti's theorem neither contradicts nor proves the Gilbert and Golab conjecture.

In this paper, we advance the understanding of relativistic distributed systems along several directions.
\begin{enumerate}
	\item We partly confirm Gilbert and Golab's conjecture.
	First, we present a detailed analysis of the Raft replicated state machine \cite{raft}, which can be used to implement a shared object of arbitrary sequential specification, in \Cref{sec:raft}.
	We then analyze the Attiya, Bar-Noy and Dolev (ABD) shared register construction \cite{abd} in \Cref{sec:abd}.
    An informal discussion of quorum-replicated key-value stores follows in \Cref{sec:kv}.
	We establish R3-linearizability in general for Raft, and in special cases for the other algorithms.
	\item Our analysis yields the first proofs of R3-linearizability for concrete distributed algorithms,
		and also the first examples of applying Gilbert and Golab's proof technique \cite{gilgol}.
	\item Whereas Jayanti's theorem \cite{jayanti} allows the linearization order for an algorithm to be relative (i.e., inherently dependent on the observer's frame of reference),
		our results for common asynchronous algorithms demonstrate that all observers can in fact agree on a single global linearization order---a result
		of both practical and theoretical importance.
\end{enumerate}


\newcommand{\compl}{\textrm{compl}}
\newcommand{\optype}{operation type}
\newcommand{\Optype}{Operation type}
\newcommand{\optypetype}{type}
\newcommand{\opex}{operation} 
\newcommand{\Opex}{Operation}
\newcommand{\astep}{atomic step} 
\newcommand{\opop}{operation} 
\newcommand{\Opop}{Operation}
\newcommand{\INV}{\textsc{In}}
\newcommand{\RES}{\textsc{Re}}

\newcommand{\typetupl}{\ensuremath{(\States, \sinit, \Ops, \Resps, \delta)}}

\newcommand{\op}{op}
\newcommand{\ot}{ot}
\newcommand{\ox}{ox}
\newcommand{\re}{ret}
\newcommand{\RefObj}{\O_{\Typ}}
\newcommand{\Done}{OK}
\newcommand{\CRASH}{CRASH}

\newcommand{\E}{\ensuremath{\mathcal{E}}}

\section{Model} \label{sec:model}
Our model is based very closely on \cite{gilgol}.
A set of $N$ processors, denoted $\Procs = \bc{p_1, p_2, ..., p_N}$,
	is subject to relativistic effects due to motion and gravitational fields \cite{einstein}.
Since it is impossible for all observers (i.e., processors in their own frames of references)
	to agree on a total order of events,
	an execution is modeled as a history $H = (E, <_E)$ where $E$ is a finite set of \emph{events}
	and $<_E$ is an irreflexive partial order over $E$.
Each event has a position in Einstein's four-dimensional spacetime, and represents a primitive action (sending/receiving a message, or local computation).
The partial order $<_E$ captures physical causality, which is invariant across frames of reference.
Physical causality is a refinement of Lamport's ``happens before'' relation \cite{lam:ipc1,lam:time}, which captures
	only program order and message passing---the sole means of coordination in purely asynchronous algorithms.
A \emph{classic history} $H = (E, <_E)$ is one where $<_E$ is a total order, meaning that events
	are observed from the frame of reference of a specific processor.
Different processors may perceive the same execution through distinct classic histories with conflicting total orders over the events,
	but all such total orders are refinements of the underlying partial order of physical causality.

The processors execute algorithms that implement shared objects.
Following Herlihy and Wing \cite{her:lin}, processors interact with the shared object implementations
	by invoking operations and receiving their responses.
An invocation event of operation $\op$ by processor $p$ on object $X$ is denoted (\INV$, p, X, \op$).
A response event with return value $\re$ is denoted (\RES$, p, X, \re$).
A response is \emph{matching} with respect to an invocation if both refer to the same processor and object.
An \emph{operation execution} (or op-ex) comprises an invocation and its matching response, if it exists.
For any history $H = (E, <_E)$, an op-ex $\ox$ is \emph{pending} if it lacks a matching response, and is \emph{complete} otherwise.
We denote by $inv(\ox)$ and $res(\ox)$ the invocation and response events of $\ox$, respectively.
If $\ox$ is complete, then $inv(\ox) <_E res(\ox)$.
We denote by $\compl(H)$ the subsequence of $H$ comprising the events of complete op-ex's in $H$.

Given a history $H = (E, <_E)$ and processor $p$, we denote by $H|p$ the subhistory comprising events applied at processor $p$
	and the corresponding subset of $<_E$.
Two histories $H$ and $H'$ are \emph{equivalent} if for every processor $p$, $H|p = H'|p$.
A history $H$ is \emph{well-formed} if for every processor $p$, if $H|p$ is non-empty, then
the events in $H|p$ are totally ordered and form an alternating sequence of invocation and response events,
starting with an invocation.

Lamport \cite{lam:ipc1} defines two temporal relations over pairs of operation executions, $\ox_1$ and $\ox_2$ in a history $H = (E, <_E)$:
$\ox_1 \rightarrow_H \ox_2$ denotes that $\ox_1$ is complete and its response causally precedes the invocation of $\ox_2$; and
$\ox_1 \dashrightarrow_H \ox_2$ denotes that the invocation of $\ox_1$ causally precedes some event of $\ox_2$.
We say that $H$ is \emph{sequential} if every pair of distinct op-ex's is related by $\rightarrow_H$.

The correct behavior of an object under sequential access is defined by its \emph{type} $\Typ = \typetupl$
where $\States$ is the set of states, $\sinit \in \States$ is the
initial state, $\Ops$ is a set of \optype{s},
$\Resps$ is the set of responses, and $\delta: \States \times
\Ops \goesto \States \times \Resps$ is a (one-to-many) state transition mapping.
Specifically, if a processor applies an \opop\ of \optypetype\ $\ot$ to an object
of type $\Typ$ that is in state $s$, then the object may return
a response $r$ and change its state to $s'$ if and only if $(s', r) \in \delta(s, \ot)$.
For a sequential history $H$, we say that an object $X$ \emph{conforms} to its type $\Typ = \typetupl$ in $H$ if
$H|X$ is consistent with some sequence of transitions of $\delta$ starting from state $\sinit$, in the following sense:
letting $\ot_i$ and $\re_i$ denote the operation type and response of the
$i$'th op-ex on $X$ in $H|X$ out of $k$,
there exists a sequence $\bg{s_0, s_1, s_2, ..., s_k}$ of states in $\States$
such that $s_0 = \sinit$, and $(s_i, \re_i) \in \delta(s_{i-1}, \ot_i)$ holds for all $i \leq k$.
We say that $H$ is \emph{legal} if for every object $X$ accessed in $H$, $X$ conforms to its type in $H$.


\section{Background} \label{sec:back}
This section reviews the formal definitions of classic and relativistic linearizability.
We start with classic linearizability as conceived by Herlihy and Wing \cite{her:lin} and explained in \cite{gilgol}:
\begin{definition}\label{def:lin}
	A classic history $H = (E, <_E)$ is \emph{linearizable} if there exists
	a classic history $H' = (E', <_{E'})$ (called the \emph{completion} of $H$) such that:
	\begin{enumerate}
		\item [\textbf{L1}] $H \subseteq H'$ (i.e., $E \subseteq E'$ and $<_E \subseteq <_{E'}$);
		\item [\textbf{L2}] $E'$ is obtained from $E$ by adding a set $M$ of matching responses for a subset of pending operation executions in $H$,
			and each event of $E$ precedes every event of $M$ in $<_{E'}$;
		\item[\textbf{L3}] $\compl(H')$ is equivalent to some legal sequential history $S$; and
		\item[\textbf{L4}] $\rightarrow_S$ extends $\rightarrow_H$, meaning that
			for any pair of operation executions $\ox,\ox'$ in $S$, if $\ox \rightarrow_S \ox'$
			then $\ox'' \rightarrow_H \ox$ is false where $\ox''$ denotes the (possibly pending) counterpart of $\ox'$ in $H$.
	\end{enumerate}
	The sequential history $S$ referred to by property L3 is called a \emph{linearization} (in this paper a \emph{classic linearization}) of~$H$.
\end{definition}

Next, we restate Gilbert and Golab's \cite{gilgol} three variations of relativistic linearizability.
Given an execution history $H = (E, <_E)$ with partially ordered events, the definitions
	realize observations of $H$ from a specific frame of reference using a total order $<_T$ over the events in $E$
	that refines $<_E$ (i.e., $<_E \subseteq <_T$).

\begin{definition}[R1-linearizability]\label{def:r1}
	For any execution history $H = (E, <_E)$, we say that $H$ is \emph{R1-linearizable} if and only if
	there exists a total order $<_T$ over $E$ that refines $<_E$, such that
	the classic history $H_T = (E, <_T)$ is linearizable.
	Any linearization of $H_T$ is called an \emph{R1-linearization of $H$}.
\end{definition}

\begin{definition}[R2-linearizability]\label{def:r2}
	For any execution history $H = (E, <_E)$, we say that $H$ is \emph{R2-linearizable} if and only if
	for every total order $<_T$ over $E$ that refines $<_E$,
	the classic history $H_T = (E, <_T)$ is linearizable.
	Any linearization $S$ of any such $H_T$ is called an \emph{R2-linearization of $H$}.
\end{definition}

\begin{definition}[R3-linearizability]\label{def:r3}
	For any execution history $H = (E, <_E)$, we say that $H$ is \emph{R3-linearizable} if and only if
	there exists an R1-linearization $S$ of $H$ such that
	for every total order $<_T$ over $E$ that refines $<_E$,
	$S$ is a linearization of the classic history $H_T = (E, <_T)$.
	Any such history $S$ is called an \emph{R3-linearization} of $H$.
\end{definition}

Intuitively, R1 states that a history $H$ appears linearizable in \emph{some} frame of reference,
	R2 states that $H$ appears linearizable in \emph{every} frame of reference,
	and R3 states that $H$ is not only linearizable in every frame of reference but
	all observers furthermore agree on a \emph{common linearization}.
R3 implies R2, which implies R1.
R2 is the equivalent of Herlihy and Wing's classic linearizability property in the relativistic model,
	and naturally inherits the \emph{locality property}:
	a history $H$ involving multiple implemented objects is linearizable if and only if the projection
	of $H$ onto each implemented object $O$ (denoted $H|O$) is individually linearizable.
As shown in \cite{gilgol}, R3 is not local.
The proof of this result exhibits a multi-object execution history that is R2-linearizable but not R3-linearizable,
	meaning that every observer perceives classical linearizability and yet
	there exist observers who fundamentally disagree on the linearization order.
Another example of this phenomenon is presented in \cite{jayanti}.
Ideally, a history $H$ would satisfy R2 linearizability and each projection $H|O$ onto
	an object $O$ accessed in $H$ would satisfy the stronger R3 property.

Gilbert and Golab's proof techniques for R2 and R3 linearizability are based on
	the insight that coordination mechanisms used by shared object implementations
	induce certain structural properties on execution histories.
These properties are formalized in terms of \emph{connectedness}, which characterizes causal relationships:

\begin{definition}\label{def:con}
	For any history $H = (E, <_E)$, and any distinct operation executions $\ox,\ox'$ in $H$,
	we say that $\ox$ and $\ox'$ are:
	\begin{itemize}
		\item \emph{strongly connected in $H$} if
		$\ox \dashrightarrow_H \ox'$ and $\ox' \dashrightarrow_H \ox$
		\item \emph{weakly connected in $H$} if either
		$\ox \dashrightarrow_H \ox'$ or $\ox' \dashrightarrow_H \ox$ (but not both)
		\item \emph{connected in $H$} if
		they are strongly or weakly connected in $H$
		\item \emph{disconnected in $H$} if
		they are not connected in $H$
	\end{itemize}
The history $H$ is called \emph{connected} if every pair of operation executions in $H$ is connected.
\end{definition}

Using this notion, Gilbert and Golab show that some R1-linearizable implementations automatically satisfy R2:

\begin{definition}\label{def:r2cond}
	For any history $H = (E, <_E)$, an R1-linearization $S$ of $H$ is called \emph{R2-conducive}
	if for every pair of operation executions $\ox,\ox'$ in $H$ that have counterparts in $S$, the following hold:
	\begin{enumerate}
		\item\label{dr2c1} if $\ox$ and $\ox'$ are weakly connected and $\ox \dashrightarrow_H \ox'$, then $\ox \rightarrow_S \ox'$; and
		\item\label{dr2c2} if $\ox$ and $\ox'$ are disconnected then both are executions of read-only operation types (i.e., ones that always cause trivial state transitions),
		and the same holds for all operation executions that appear between $\ox$ and $\ox'$ in $S$.
	\end{enumerate}
\end{definition}

\begin{theorem}\label{thm:r1r2r3}
	Let $H = (E, <_E)$ be any history and suppose that $S$ is an R2-conducive R1-linearization of $H$.
	Then $H$ is R2-linearizable.
	Furthermore, if $H$ is connected then $S$ itself is an R2-linearization of $H$ (in any frame of reference), hence $H$ is R3-linearizable.\footnote{This result succinctly combines Theorems~5 and 6 in \cite{gilgol}.}
\end{theorem}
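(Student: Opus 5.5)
The plan is to fix an arbitrary total order $<_T$ that refines $<_E$, let $H_T = (E,<_T)$, and build a linearization of $H_T$ out of $S$. I would use the completion $H'$ that witnesses $S$ as an R1-linearization: add the same matching responses $M$, placed after all events of $E$ in $<_T$. Then L1 and L2 hold immediately, and L3 holds because $S$ is legal and sequential and is equivalent to $\compl(H')$ whatever the order on $E$, since equivalence only looks at per-processor projections. Well-formedness of each $H|p$ transfers as well, because the events of a processor are totally ordered by $<_E$ and therefore by $<_T$. So the real work is L4, the real-time constraint with respect to $\rightarrow_{H_T}$.

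\textbf{Connected case.} Suppose $\ox \rightarrow_S \ox'$ but $\ox'' \rightarrow_{H_T} \ox$, where $\ox''$ is the counterpart of $\ox'$. I would split on how $\ox$ and $\ox'$ are connected in $H$.
\begin{itemize}
\item If they are weakly connected with $\ox \dashrightarrow_H \ox'$, then $\ox''$ does not causally influence $\ox$. Hence $res(\ox'') \not<_E inv(\ox)$, so $\ox'' \not\rightarrow_H \ox$.
\item If they are weakly connected with $\ox' \dashrightarrow_H \ox$, condition~\ref{dr2c1} of Definition~\ref{def:r2cond} gives $\ox' \rightarrow_S \ox$. This contradicts $\ox \rightarrow_S \ox'$.
\item If they are strongly connected, $inv(\ox) <_E e$ for some event $e$ of $\ox''$. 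Since $<_E \subseteq <_T$ and $e \le_T res(\ox'')$, we get $inv(\ox) <_T res(\ox'')$. This contradicts $res(\ox'') <_T inv(\ox)$.
\end{itemize}

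The first case still has to be tied to $<_T$ rather than $<_E$. This is the main obstacle: a relation $\ox'' \rightarrow_{H_T}\ox$ can hold with no causal link at all. My approach is to show that a precedence in $<_T$ between the response of one op-ex and the invocation of another, when no causal link exists in either direction, is only possible when the pair is disconnected. In the weakly connected case with $\ox \dashrightarrow_H \ox'$, we have $inv(\ox) <_E e \le res(\ox'')$, and this already forces $inv(\ox) <_T res(\ox'')$, so $\ox'' \rightarrow_{H_T} \ox$ fails outright. Connectedness of $H$ therefore means every pair falls into the three cases above, so $S$ satisfies L4 in every frame. This gives R2 with the single linearization $S$, which is R3 by Definition~\ref{def:r3}.

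\textbf{General R2 claim.} When $H$ is not connected, $S$ itself may violate L4 in $H_T$, but only for disconnected pairs. By condition~\ref{dr2c2}, such pairs and everything between them in $S$ are read-only. I would repair $S$ into $S_T$ by stably re-sorting each maximal block of consecutive read-only op-ex's in $S$ according to $<_T$ of their invocations, or more carefully by any topological order consistent with $\rightarrow_{H_T}$ and with the existing $S$-order on connected pairs.

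Legality is preserved because read-only operations cause trivial state transitions, so within such a block the state is constant and each response remains legal in any order. The delicate point, which I expect to be the hardest step, is showing that the reordering introduces no new L4 violation with the non-read-only op-ex's bounding the block. It also has to keep the order among connected read-only pairs, for which condition~\ref{dr2c1} and the causality argument above show that $\rightarrow_{H_T}$ agrees with $S$. I would handle this by proving that the union of $\rightarrow_{H_T}$ and the $S$-order restricted to connected pairs is acyclic within a block, using transitivity of $<_T$, and then taking a linear extension of that union.
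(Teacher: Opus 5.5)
The paper gives no proof of this theorem. It imports the result from Gilbert and Golab \cite{gilgol} (their Theorems~5 and~6), so your proposal has to stand on its own. Your treatment of L1--L3 and of the connected case is correct. Reusing the same appended responses $M$ works in every frame. Whenever $\ox \dashrightarrow_H \ox'$ holds (weakly or strongly), it forces $inv(\ox) <_T res(\ox')$ in every frame refining $<_E$. A weakly connected pair with $\ox' \dashrightarrow_H \ox$ is ordered $\ox' \rightarrow_S \ox$ by clause~1 of Definition~\ref{def:r2cond}. Hence $S$ can violate L4 in $H_T$ only on disconnected pairs. Your first bullet on its own only rules out $\ox'' \rightarrow_H \ox$, but your later repair is the same one-line argument as the strongly connected bullet.

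The gap is in the general R2 part, at the step you defer. You require the repaired order to keep the $S$-order on connected read-only pairs, and you plan to prove that the union of $\rightarrow_{H_T}$ with that order is acyclic within a block. This is false, even when restricted to weakly connected pairs. Here is a counterexample:
\begin{itemize}
\item Let $a,b,c$ be reads by three distinct processors, all returning the initial value, with no writes.
\item Let $<_E$ be generated by $inv(x) <_E res(x)$ for each $x$, together with $inv(b) <_E res(a)$ and $inv(a) <_E res(c)$.
\item Let $S = b,a,c$.
\end{itemize}
Then $b \dashrightarrow_H a$ and $a \dashrightarrow_H c$ are weak connections, and $b,c$ are disconnected. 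So $S$ is R2-conducive: clause~1 holds because $S$ orders $b$ before $a$ before $c$, and clause~2 holds because all three are reads. $S$ is also an R1-linearization, using the frame that places all invocations before all responses. Now take the frame
\[
inv(a) <_T inv(c) <_T res(c) <_T inv(b) <_T res(b) <_T res(a).
\]
It refines $<_E$ and gives $c \rightarrow_{H_T} b$, closing the cycle $b \to a \to c \to b$. No order can satisfy your constraints.

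The constraint is also unnecessary. L4 mentions only $\rightarrow_{H_T}$, and legality is insensitive to permutations inside a read-only block because the state is constant there. Your first option, sorting each maximal read-only block by the $<_T$-order of invocations, already works, and the point you call delicate is immediate:
\begin{itemize}
\item $res(x) <_T inv(y)$ implies $inv(x) <_T inv(y)$, so the sort extends $\rightarrow_{H_T}$ within the block.
\item Since program order is contained in $\rightarrow_{H_T}$, the sort also preserves program order, and hence L3 equivalence.
\item Every pair not contained in one block keeps its $S$-order. That order cannot violate L4, because every violation is a disconnected pair, and clause~2 places such a pair, together with everything between them, inside a single maximal read-only block.
\end{itemize}
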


Gilbert and Golab conclude their paper by conjecturing that many classic distributed shared object implementations, particularly those based on majority quorums
	(e.g., ABD \cite{abd}, replicated state machines and key-value stores), generate connected execution histories with R2-conducive R1-linearizations,
	and hence satisfy both R2 and R3 linearizability via \Cref{thm:r1r2r3}.


\section{Raft Replicated State Machine} \label{sec:raft}

Raft is a consensus algorithm designed to manage a replicated log across a cluster of servers (i.e., processors in our model), ensuring that each server agrees on the same sequence of state machine commands even in the presence of failures \cite{raft}. Unlike Paxos, which is notoriously difficult to reason about, Raft was explicitly designed for understandability by decomposing the consensus problem into three relatively independent subproblems: leader election, log replication, and safety. At any given time, each server in a Raft cluster is in one of three states---leader, follower, or candidate---and time is divided into terms of arbitrary length, each identified by a monotonically increasing integer. A term begins with an election, and if a leader is successfully elected, it serves for the remainder of that term, coordinating all client interactions and log replication.

The algorithm proceeds roughly as follows. In normal operation, the leader receives client requests, appends them as entries to its local log, and replicates those entries to the other servers in parallel via remote procedure calls (RPCs). Once a majority of servers have acknowledged a given entry, the leader considers it committed and applies it to its state machine. If the leader fails or becomes unreachable, followers that stop receiving heartbeats will time out, transition to the candidate state, and initiate a new election by incrementing the term and requesting votes from their peers. A candidate must receive votes from a majority of the cluster to become the new leader, and Raft's voting rules guarantee that any elected leader already contains all previously committed entries. This majority-based mechanism---used in both elections and commit decisions---is what ensures that committed entries are never lost, even when servers crash and recover.

\begin{figure}[t]
\centering
\begin{tikzpicture}[
    entry/.style={minimum width=1.0cm, minimum height=0.7cm, draw, rounded corners=1.5pt, font=\small\bfseries, inner sep=0pt},
    term1/.style={entry, fill=white, draw=black!60},
    term2/.style={entry, fill=black!20, draw=black!60},
    term3/.style={entry, fill=black!55, draw=black!70, text=white},
    server/.style={font=\small\bfseries, anchor=east},
    idx/.style={font=\scriptsize, text=gray},
]

\def\colsep{1.15}
\def\rowsep{1.0}

\foreach \i in {1,...,7} {
    \node[idx] at ({\i*\colsep}, 0.6) {\i};
}
\node[font=\scriptsize, text=gray] at ({4*\colsep}, 1.1) {log index};

\def\row{0}
\node[server] at (0.3, -\row*\rowsep) {$S_1$};
\node[font=\tiny, anchor=east, text=gray] at (0.3, -\row*\rowsep - 0.3) {(leader)};
\foreach \i in {1,2,3} { \node[term1] at ({\i*\colsep}, -\row*\rowsep) {t1}; }
\node[term2] at ({4*\colsep}, -\row*\rowsep) {t2};
\foreach \i in {5,6,7} { \node[term3] at ({\i*\colsep}, -\row*\rowsep) {t3}; }

\def\row{1}
\node[server] at (0.3, -\row*\rowsep) {$S_2$};
\foreach \i in {1,2,3} { \node[term1] at ({\i*\colsep}, -\row*\rowsep) {t1}; }
\node[term2] at ({4*\colsep}, -\row*\rowsep) {t2};
\foreach \i in {5,6} { \node[term3] at ({\i*\colsep}, -\row*\rowsep) {t3}; }

\def\row{2}
\node[server] at (0.3, -\row*\rowsep) {$S_3$};
\foreach \i in {1,2,3} { \node[term1] at ({\i*\colsep}, -\row*\rowsep) {t1}; }
\node[term2] at ({4*\colsep}, -\row*\rowsep) {t2};
\node[term3] at ({5*\colsep}, -\row*\rowsep) {t3};

\def\row{3}
\node[server] at (0.3, -\row*\rowsep) {$S_4$};
\foreach \i in {1,2,3} { \node[term1] at ({\i*\colsep}, -\row*\rowsep) {t1}; }
\node[term2] at ({4*\colsep}, -\row*\rowsep) {t2};

\def\row{4}
\node[server] at (0.3, -\row*\rowsep) {$S_5$};
\foreach \i in {1,2} { \node[term1] at ({\i*\colsep}, -\row*\rowsep) {t1}; }

\draw[dashed, thick, black!70]
    ({5*\colsep + 0.6}, 0.35) -- ({5*\colsep + 0.6}, -4.4);
\node[font=\scriptsize, text=black!70, anchor=north] at ({5*\colsep + 0.6}, -4.5) {committed (index 5)};

\node[term1, minimum width=0.5cm, minimum height=0.35cm] at (1.15, -5.2) {};
\node[font=\scriptsize, anchor=west] at (1.5, -5.2) {term 1};
\node[term2, minimum width=0.5cm, minimum height=0.35cm] at (3.0, -5.2) {};
\node[font=\scriptsize, anchor=west] at (3.35, -5.2) {term 2};
\node[term3, minimum width=0.5cm, minimum height=0.35cm] at (4.85, -5.2) {};
\node[font=\scriptsize, anchor=west] at (5.2, -5.2) {term 3};

\end{tikzpicture}
\caption{Replicated log structure across a five-server Raft cluster during term~3. Each cell represents a log entry labeled with the term in which it was created. The dashed line marks the commit index: entries through index~5 have been replicated to a majority and are considered committed.}
\label{fig:raft-log}
\end{figure}

\Cref{fig:raft-log} depicts the structure of Raft's log. Each cell represents a log entry (i.e., a command) labeled with the term in which it was created. Row $i$ represents the local log of server $S_i$. The leader's local log ($S_1$ in \Cref{fig:raft-log}) has the largest number of entries in its leadership term, as it is the first node (i.e., server) that receives the clients' commands. Whenever an entry is logged into the leader's local log, the leader replicates it to other servers concurrently, and once the majority of the servers (in this example, at least 2 follower servers) acknowledge the entry, it will be considered as committed. In \Cref{fig:raft-log}, 5 commands have been committed, which means in the case of failure, any server that would become the next leader will have the first five entries in the log.

The Raft algorithm carries an inherent causal relationship among its operations.
This can be observed from the algorithm's behavior, where the state of the cluster is determined by the operations of nodes. E.g., a leader becomes unavailable, which causes some other node to identify itself as a candidate, which results in a new term, and possibly a new leader. Moreover, a majority of servers must acknowledge an entry before it is committed and applied to the state machine.
This kind of behavior of the algorithm leads us to believe that a unique linearization order exists over the operations, as determined by the unique positions of the corresponding commands in the replicated log. Therefore, we expect Raft to be R3-linearizable. However, this observation does not guarantee that R3-linearizability of Raft can be proved by the technique introduced by Gilbert and Golab, and Jayanti's theorem proves Raft R2-linearizability but does not settle the question for R3-linearizability.

\begin{figure}[t]
\centering
\begin{tikzpicture}[scale=1]
 
\def\ya{0}      
\def\yb{-1.5}   
\def\ycc{-3.0}  
 
\draw[-{Stealth[length=3pt]}, gray!40, thin]
    (0, 0.7) -- (7.8, 0.7) node[right, font=\scriptsize, text=gray] {time};
 
\node[plabel] at (-0.2, \ya) {$S_1$};
\node[font=\tiny, anchor=north east, text=gray] at (-0.15, \ya-0.12) {(leader)};
\node[plabel] at (-0.2, \yb) {$S_2$};
\node[plabel] at (-0.2, \ycc) {$S_3$};
 
\draw[timeline] (0, \ya) -- (2.2, \ya);
\draw[timeline dashed] (2.2, \ya) -- (7.5, \ya);
 
\draw[timeline] (0, \yb) -- (7.5, \yb);
\draw[timeline] (0, \ycc) -- (7.5, \ycc);
 
 
\draw[msg, black!50] (0.4, \ya) -- (1.1, \yb);
\node[ev, black!50] at (0.4, \ya) {};
\node[ev, black!50] at (1.1, \yb) {};
 
\draw[msg, black!50] (0.4, \ya) -- (1.3, \ycc);
\node[ev, black!50] at (1.3, \ycc) {};
 
\node[mlabel, text=black!60, anchor=east] at (0.55, {(\ya+\yb)/2}) {heartbeat};
 
 
\node[font=\bfseries, text=black] at (2.2, \ya) {$\times$};
\node[mlabel, text=black!70, anchor=south] at (2.2, \ya+0.17) {crash};
 
 
\draw[gray!40, thin, dashed] (3.5, \yb+0.25) -- (3.5, \yb-0.25);
\node[mlabel, text=gray, anchor=south, align=center] at (3.5, \yb+0.3)
    {election\\[-2pt]timeout};
 
\node[ev, black!80] at (3.7, \yb) {};
 
\node[mlabel, text=black!70, anchor=north] at (3.7, \yb-0.2)
    {\itshape term++};
 
 
\draw[msg lost, black!40] (3.9, \yb) -- (4.6, \ya);
\node[ev, black!40, opacity=0.4] at (4.6, \ya) {};
 
\node[font=\scriptsize, text=black!60, opacity=0.6] at (4.75, \ya) {$\times$};
 
\draw[msg, black!80] (3.9, \yb) -- (4.7, \ycc);
\node[ev, black!80] at (3.9, \yb) {};
\node[ev, black!80] at (4.7, \ycc) {};
 
\node[mlabel, text=black!80, anchor=east] at (4.1, {(\yb+\ycc)/2})
    {\textsc{RequestVote}};
 
 
\draw[msg, black!80, densely dashed] (4.9, \ycc) -- (5.7, \yb);
\node[ev, black!80] at (4.9, \ycc) {};
\node[ev, black!80] at (5.7, \yb) {};
 
\node[mlabel, text=black!80, anchor=west] at (5.5, {(\yb+\ycc)/2})
    {\textsc{VoteGranted}};
 
 
\draw[black!70, thick, rounded corners=1pt]
    (5.62, \yb-0.12) rectangle (5.78, \yb+0.12);
\node[mlabel, text=black!70, anchor=south, align=center]
    at (5.7, \yb+0.17) {$S_2$ elected\\[-2pt]leader};
 
\node[mlabel, text=black!60, anchor=north west, align=left]
    at (5.85, \yb-0.25) {\itshape majority ($S_2 + S_3$)};
 
\draw[msg, black!80] (0.0, -4.0) -- (0.5, -4.0);
\node[mlabel, anchor=west] at (0.55, -4.0) {RPC request};
\draw[msg, black!80, densely dashed] (2.3, -4.0) -- (2.8, -4.0);
\node[mlabel, anchor=west] at (2.85, -4.0) {RPC response};
\draw[msg lost, black!40] (4.8, -4.0) -- (5.3, -4.0);
\node[font=\scriptsize, text=black!60, opacity=0.6] at (5.45, -4.0) {$\times$};
\node[mlabel, anchor=west] at (5.6, -4.0) {lost/no reply};
 
\end{tikzpicture}
\caption{Leader failure and election in Raft. After $S_1$ crashes, $S_2$'s
election timer expires and it transitions to the candidate state, incrementing
its term. $S_2$ sends \textsc{RequestVote} RPCs to the other servers; $S_1$
does not respond (dashed arrow), but $S_3$ grants its vote. With a majority
of votes ($S_2 + S_3$), $S_2$ becomes the new leader for the current term.}
\label{fig:raft-election}
\end{figure}

\Cref{fig:raft-election} represents an execution in which the leader (server $S_1$) crashes and fails to send heartbeat RPCs. When the election timeout is reached, $S_2$ increments the term and requests votes from other servers, in order to become the new leader.
The vote is granted to $S_2$ by $S_3$ and $S_2$ becomes the new leader.\footnote{$S_2$ votes for itself, hence only one additional vote is required to form a majority in the three-server case.}
Raft's voting scheme ensures that the elected server has a log that is at least as up-to-date as that of every server in some majority quorum.
This is guaranteed because when a server (in this example $S_2$) requests votes from other servers by sending RPCs to them, it also attaches the index and term of its last log entry along with the new term number to the RPC, and a voter endorses the candidate only if the candidate's log is at least as up-to-date as its own.
Once $S_2$'s election is confirmed, the clients send their commands to $S_2$, and it coordinates their replication.

One can easily see the many causal relationships among the events in \Cref{fig:raft-election}.
For example, $S_2$ would not have increased the term and started a leader election if $S_1$ had not stopped sending heartbeats, at least assuming a timeout value appropriately tuned to the network and processing delays. Note that the heartbeat timeout is measured locally by each server, and therefore is immune to the time dilation and length contraction effects.

The leader transition exemplifies even more causal relationships in Raft, independently of the chosen heartbeat timeout.
In the scenario of \Cref{fig:raft-election}, for $S_2$ to become the leader after $S_1$, the voting scheme requires $S_2$'s log to already contain any command corresponding to $S_1$'s term and earlier terms if that command was already committed or will be committed in the future.
If not, then the majority of the servers that voted for $S_2$ lack the command and will no longer accept it after casting their votes
for $S_2$ because the Raft protocol rejects RPCs from obsolete terms.
Raft's \emph{log matching} property then ensures that any such command is replicated to a majority of servers before $S_2$ commits any new command in its own term.
This enforces causality between the operations corresponding to commands committed in earlier terms and commands committed in $S_2$'s term.
Thus, the causal relationships yield a unique linearization order, recorded in some prefix of the leader's local log, and independent of the frame of reference.
Moreover, this committed prefix (equivalently, the state machine state it produces) serves as a canonical record of the execution that every observer agrees upon, regardless of the frame of reference from which it is observed.
Hence, we expect the Raft algorithm to be R3-linearizable.

In order to prove R3-linearizability formally using the Gilbert-Golab technique, we first show that all pairs of operations are connected, then exhibit an R2-conducive R1-linearization, and finally apply \Cref{thm:r1r2r3}.
Consider any history $H = (E, <_E)$ of Raft.
The completion $H' = (E', <_{E'})$ of $H$ is constructed by completing the pending op-exes with committed commands, and removing all other pending ones from $H$.
For completed op-exes, the matching response event carries a return value determined by simulating the execution of the corresponding command based on the sequence of commands in the replicated log.
A unique coordinate in spacetime is assigned to each matching response event $e$ such that every event of $H$ causally precedes $e$.
Next, we construct a \emph{canonical R1-linearization} of $H$ from $H'$.
Fix a total order $<_T$ that refines $<_{E'}$ and corresponds to the reference frame of the final Raft leader (i.e., the leader for the maximum term that occurs in $H$).\footnote{This server is guaranteed to have the most complete copy of the replicated log.}
Let $H_T = (E', <_T)$, and let $S$ denote a sequential ordering of operations in $H'$ based on the ultimate order of the corresponding commands in the final leader's local log.
In other words, if the op-exes $\ox$ and $\ox'$ both appear in $H'$ and the command for $\ox$ precedes the command for $\ox'$
	in the leader's log at the end of $H$, then $\ox \rightarrow_S \ox'$.
Since commands for op-exes in $H'$ are all committed at a majority of servers in Raft in $H$, it follows that
	$\rightarrow_S$ is consistent with $\rightarrow_{H_T}$.
Furthermore, the history $S$ is legal since it reflects the final leader's execution of commands in the correct order.
Thus, $S$ is an R1-linearization of $H$.

\begin{lemma} \label{lem:raft:connect}
Let $H = (E, <_E)$ be any history of Raft and let $H'$ be its completion.
Then, any two operation executions $\ox$ and $\ox'$ in $H$ that have counterparts in $H'$ are connected.
\end{lemma}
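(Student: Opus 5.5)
The plan is to show that for any two op-exes $\ox$ and $\ox'$ with counterparts in $H'$, at least one of $\ox \dashrightarrow_H \ox'$ or $\ox' \dashrightarrow_H \ox$ holds. Since each such op-ex corresponds to a command committed in the final leader's log, each has a well-defined log position. Fix the pair and assume without loss of generality that the command $c$ of $\ox$ precedes the command $c'$ of $\ox'$ in the final leader's log, as in the construction of $S$. The goal is then $\ox \dashrightarrow_H \ox'$. This requires a causal chain from $inv(\ox)$ to some event of $\ox'$. We will use the event at which $c'$ is committed, or the event at which $c'$ is appended to a leader's log. This event belongs to the execution of $\ox'$, since the op-ex spans all protocol steps taken on its behalf between its invocation and its response, or its completion in $H'$.

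The first step is to note that $inv(\ox)$ causally precedes the event $e_c$ in which $c$ is appended to the log of the leader $L$ of the term in which $c$ was created. This holds because the client request must reach $L$ by message passing. We then split into two cases according to the terms $t \le t'$ of $c$ and $c'$.

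\textbf{Same term ($t = t'$).} There is a single leader $L$ for that term, and it appends entries in log order. So the append of $c$ precedes the append of $c'$ in $L$'s program order. Thus $inv(\ox) <_E e_c <_E e_{c'}$, and the append of $c'$ is an event of $\ox'$.

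\textbf{Earlier term ($t < t'$).} Here we use the majority-intersection argument sketched before the lemma. The leader $L'$ of term $t'$ must hold $c$ (at index below $c'$) before it appends $c'$. Either $L'$ already received $c$ through an AppendEntries chain originating at $e_c$, or $L'$ won a vote from a majority quorum. Since $c$ is eventually committed at a majority, that quorum intersects the majority that stored $c$, and the log-up-to-date check forces $L'$ to already contain $c$ when elected. In either case there is a message chain from $e_c$ to $L'$ that delivers $c$ before $L'$ appends $c'$. Concretely, $c$ reaches $L'$ along a sequence of AppendEntries messages, possibly through intermediate leaders, because log entries are only ever created by the term-$t$ leader and copied by RPCs. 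So $e_c <_E e_{c'}$ again. For the edge case of the pending op-exes completed in $H'$, their added responses are causally preceded by every event of $H$, so this chain still lands inside $\ox'$.

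The main obstacle is the second case. The argument must show that \emph{possession} of $c$ at $L'$ implies a genuine causal chain starting from $e_c$. This is what rules out $L'$ obtaining an identical entry by coincidence. The fact that makes it work is that an entry with term $t$ at a given index can originate only from the unique term-$t$ leader, by Raft's election safety and log matching properties. Hence every copy of $c$ descends by message passing from $e_c$, and this yields $inv(\ox) \dashrightarrow_H \ox'$, which is connectedness.
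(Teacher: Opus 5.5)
Your proposal is correct and follows the paper's proof. It uses the same split into a same-term case, where the leader's program order orders the two appends, and a distinct-term case, where election plus log matching give a causal chain into the later term; your extra steps (tracing $inv(\ox)$ to the append of $c$, and using uniqueness of the term-$t$ entry to show every copy of $c$ descends from that append) make the chain more explicit than the paper does. One small caution: the quorum-intersection remark is valid only if $c$ was committed before $L'$ was elected, so the claim that $L'$ holds $c$ before appending $c'$ is better derived directly from log matching, since $c$ precedes $c'$ in the final log, and you do invoke log matching at the end.
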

\begin{proof}
 Let $\ox$ and $\ox'$ be operation executions in $H$ representing state machine commands $cmd$ and $cmd'$, respectively.
 Since we assume both op-exes have counterparts in $H'$, it follows that both have corresponding commands committed to a majority of servers.
 The two commands are committed either in the same leadership term or during distinct ones.

 Suppose that the two commands are committed in the same term, and without the loss of generality, $cmd$ precedes $cmd'$ in the replicated log.
 Therefore, the event in which the leader appends $cmd$ causally precedes the one appending $cmd'$ due to the leader's program order.
 This implies that $\ox \dashrightarrow_{H'} \ox'$, as required.

 The other case to consider is when the two commands are executed during distinct terms.
 Without the loss of generality assume $cmd$ and $cmd'$ are in terms $t$ and $t'$, respectively, and $t < t'$.
 The leader election at the beginning of term $t'$ uses majority voting and establishes a causal relationship between operation executions in different terms, as explained earlier in this section.
 Specifically, the local log of the term $t'$ leader records all committed commands for term $t$ before any committed command for term $t'$.
 Furthermore, Raft's log matching property ensures that all the term $t$ commands are committed before any term $t'$ commands.
 This implies that $\ox \dashrightarrow_{H'} \ox'$, as required.
\end{proof}

\begin{lemma} \label{lem:raft:r2conducive}
Let $H = (E, <_E)$ be any history of Raft, let $H'$ be its completion, and let $S$ be the corresponding canonical R1-linearization of $H$.
Then, $S$ is R2-conducive.
\end{lemma}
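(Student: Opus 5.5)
To show that $S$ is R2-conducive, we check the two conditions of \Cref{def:r2cond} for every pair $\ox,\ox'$ of op-exes in $H$ that have counterparts in $S$, i.e., in $H'$.

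\textbf{Condition~\ref{dr2c2}.} This condition is vacuous. By \Cref{lem:raft:connect}, any two op-exes with counterparts in $H'$ are connected. Hence no such pair is disconnected, and there is no read-only restriction to verify.

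\textbf{Condition~\ref{dr2c1}.} Suppose $\ox$ and $\ox'$ are weakly connected and $\ox \dashrightarrow_H \ox'$, with commands $cmd$ and $cmd'$. We must show that $\ox \rightarrow_S \ox'$, i.e., that $cmd$ precedes $cmd'$ in the final leader's log.

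The plan is to revisit the proof of \Cref{lem:raft:connect}. That proof showed that if $cmd$ precedes $cmd'$ in the final log, then $\ox \dashrightarrow_{H'} \ox'$. We argue by contradiction. Suppose instead that $cmd'$ precedes $cmd$ in the final log. Then the same argument, applied with the roles swapped, yields $\ox' \dashrightarrow_{H'} \ox$. We also have $\ox \dashrightarrow_H \ox'$ by hypothesis. Together these make the pair strongly connected, contradicting weak connectivity. So $cmd$ precedes $cmd'$, which gives $\ox \rightarrow_S \ox'$ by the definition of $S$.

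\textbf{Expected obstacle.} The delicate step is relating $\dashrightarrow_{H'}$ to $\dashrightarrow_H$. The only events added in $H'$ are the matching responses in $M$, which by construction are causally preceded by every event of $H$. The causal chains used in \Cref{lem:raft:connect} run from $inv(\ox')$, through the leader's append of $cmd'$ and the majority commit and election messages, to the append of $cmd$ and on to the replies that constitute $res(\ox)$. I would argue that such a chain reaches an event of $\ox$ already present in $H$ whenever $\ox$ is complete. If $\ox$ is pending, the chain reaches the added response in $M$, which still counts as an event of the counterpart of $\ox$.

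More precisely, I must argue that the invocation $inv(\ox')$ causally precedes the leader's append of $cmd'$, since the client's request message is what carries the command to the leader. I also need that the append of $cmd'$ precedes some event of $\ox$ whenever $cmd'$ lies earlier in the log. This holds in the same-term case by the leader's program order. In the cross-term case it holds via the election majority and the log matching property.

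If this bookkeeping between $H$ and $H'$ causes trouble, the fallback is to note that \Cref{def:r2cond} is applied to the history whose linearization is $S$. Connectedness can then be taken in $H'$, consistently with \Cref{lem:raft:connect}.
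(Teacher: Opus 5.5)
Your proposal is correct and follows the paper's proof. The second clause of \Cref{def:r2cond} is vacuous by \Cref{lem:raft:connect}, and the first holds because log order induces $\dashrightarrow$: if $cmd'$ preceded $cmd$, the pair would be strongly connected. You are simply spelling out the contrapositive that the paper's one-line ``$\ox \dashrightarrow_H \ox'$ implies $cmd$ precedes $cmd'$'' leaves implicit.

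Your fallback of measuring connectedness in $H'$ is what the paper tacitly does, since it mixes $\dashrightarrow_H$ with $\dashrightarrow_{H'}$ and relies on $H'$ being connected. The fallback is in fact needed. Take a committed but pending $\ox$ whose only event in $H$ is $inv(\ox)$. It can be weakly connected in $H$ to some $\ox'$ with $\ox \dashrightarrow_H \ox'$ even though $cmd'$ precedes $cmd$ in the log.
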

\begin{proof}
	Consider two op-exes $\ox$ and $\ox'$ in $H$ that have counterparts in $H'$.
    If the command $cmd$ of an op-ex $\ox$ appears before the command $cmd'$ of an op-ex $\ox'$ in the replicated log, then some event of $\ox$ precedes some event of $\ox'$
    and $\ox \dashrightarrow_H \ox'$ holds, as explained in the proof of \Cref{lem:raft:connect}.
    Thus, the two op-exes are either weakly or strongly connected.
    In both cases, $S$ linearizes the op-exes based on the order of appearance of the corresponding commands in the replicated log.
    If $\ox$ and $\ox'$ are only weakly connected then $\ox \dashrightarrow_H \ox'$ implies that $cmd$ precedes $cmd'$ in the log, hence $\ox \rightarrow_S \ox'$.
    Therefore, the first clause of \Cref{def:r2cond} holds for $S$.
    From \Cref{lem:raft:connect} we know that any two op-exes in $H'$ are connected, and therefore, the second clause of \Cref{def:r2cond} holds trivially.
    Thus, the lemma is proved.
\end{proof}

\begin{theorem} \label{lem:raft:r3lin}
	Let $H = (E, <_E)$ be any history of Raft, let $H'$ be its completion, and let $S$ be the corresponding canonical R1-linearization of $H$.
	Then, $H$ is R3-linearizable, and $S$ is its R3-linearization.
\end{theorem}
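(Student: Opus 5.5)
The plan is to apply \Cref{thm:r1r2r3} directly, using the two preceding lemmas to discharge its hypotheses. There are three ingredients, to be assembled in order.

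First, recall from the construction preceding \Cref{lem:raft:connect} that $S$ is an R1-linearization of $H$. Concretely, $S$ is a linearization of the classic history $H_T$, where $<_T$ refines $<_{E'}$ and corresponds to the frame of the final leader. It is legal because it replays the final leader's committed log. It respects $\rightarrow_{H_T}$ because every command in $H'$ is committed at a majority.

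Second, \Cref{lem:raft:r2conducive} states that $S$ is R2-conducive. The first part of \Cref{thm:r1r2r3} therefore already gives R2-linearizability of $H$.

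Third, to obtain R3 I need $H$ to be connected, which is the hypothesis of the second part of \Cref{thm:r1r2r3}. This is the step I expect to be the main obstacle. \Cref{lem:raft:connect} only asserts connectedness for pairs of op-exes that have counterparts in the completion $H'$. Pending op-exes whose commands never commit are dropped from $H'$, and such an op-ex may be disconnected from others in $H$. My plan is to handle this by observing that \Cref{thm:r1r2r3}, like \Cref{def:r2cond}, is only sensitive to op-exes that appear in the linearization.

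To make that precise, I would argue as follows. A pending op-ex with no counterpart in $S$ never constrains L4: it has no response, so it is never the left operand of $\rightarrow_H$. Nor does it constrain L3 in any frame $<_T$, since every completion may simply omit it. Hence it suffices to apply the connectedness argument to the subhistory restricted to op-exes of $H'$, which is connected by \Cref{lem:raft:connect}.

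If that reduction feels too informal, I would instead state the result for $\compl(H')$, or note that Theorems~5 and~6 of \cite{gilgol} only quantify over op-exes with counterparts in $S$. Either way, the theorem yields that $S$ is a linearization of $H_T=(E,<_T)$ for every total order $<_T$ refining $<_E$.

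Finally, I would conclude by \Cref{def:r3}. $S$ is an R1-linearization of $H$, and it is also a linearization in every frame of reference. Hence $H$ is R3-linearizable and $S$ is an R3-linearization of $H$.
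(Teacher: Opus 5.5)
Your proposal is correct and follows the paper's proof. $S$ is R2-conducive by \Cref{lem:raft:r2conducive} and the completion is connected by \Cref{lem:raft:connect}, so \Cref{thm:r1r2r3} gives R2-linearizability and shows that $S$ is a linearization in every frame of reference, hence an R3-linearization.

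The paper simply asserts that $H'$ is connected and applies the theorem without comment, so your treatment of the dropped pending op-exes is extra care rather than a different route. One point to note: \Cref{lem:raft:connect} actually establishes $\dashrightarrow_{H'}$. Your reduction is therefore cleanest via your $\compl(H')$ fallback, which equals $H'$ because $H'$ has no pending op-exes: apply the theorem to $H'$, then lift the result back to $H$. Restricting instead to a subhistory of $H$ is less clean, because there the committed pending op-exes still lack responses.
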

\begin{proof}
        It follows from \Cref{lem:raft:r2conducive} that $S$ is R2-conducive. Hence, by \Cref{thm:r1r2r3}, $H$ is R2-linearizable.
        Moreover, from \Cref{lem:raft:connect} we know that $H'$ is connected.
        Hence, by \Cref{thm:r1r2r3}, $S$ is an R2-linearization of $H$ in any frame of reference.
        Thus, $H$ is R3-linearizable and $S$ is its R3-linearization.
\end{proof}

\section{ABD Simulation} \label{sec:abd}

The ABD simulation (\Cref{fig:abd}) implements a classically linearizable single-writer multi-reader shared register object.
The state of the register is replicated across the $N$ processors using majority quorums.
The values written to the register are tagged with monotonically increasing version numbers, generated by the writer.
A write operation sends the new value-version pair to all processors and awaits a reply from a majority.
In case messages are reordered by the network, each processor compares the incoming value-version pair against
	its local state, and overwrites its previous value only if the incoming one has a larger version.
A read operation sends a request to retrieve value-version pairs from all processors and awaits replies from a majority.
If all the reply messages have equal versions, then their values agree and the common value is returned.
If not, then the reader executes a write-back phase whereby the received value with the highest version is sent
	to all processors (as in a write).
The read returns this same value after a majority of processors acknowledges receipt of the write-back message.
The construction can tolerate the permanent crash failure of any minority of processors.

\begin{figure}[htbp]
	\removelatexerror
	\SetKwProg{Fn}{Procedure}{}{end}
	\begin{flushleft}
	\textbf{Variables:}
	\begin{itemize}
	\item $last\_version$: integer, initially zero (writer only)
	\item $val_i$, $version_i$: processor $i$'s value (initially equal to the initial value of the simulated read/write register)
		  and its corresponding version (initially zero)
	\end{itemize}
	\end{flushleft}
	\begin{algorithm}[H]
		 \Fn{Write(v) for writer processor}{
		 	$last\_version := last\_version + 1$\;
		 	send $\bg{v, last\_version}$ to all processors\;
		 	wait for ACK from majority
		}
	\end{algorithm}
	\SetKwProg{Fn}{Procedure}{}{end}
	\begin{algorithm}[H]
		\Fn{Read() for processor $i$}{
			request $\bg{val_j, version_j}$ from each processor $j$\;
			$R := $ responses from majority of processors\;
			$\bg{v, version} := $ element of $R$ with largest version\;
			\If {$|R| > 1$} {
				send $\bg{v, version}$ to all processors\;
				wait for ACK from majority\;
			}
			\Return $v$\;
		}
	\end{algorithm}
	\SetKwProg{Fn}{Procedure}{}{end}
	\begin{algorithm}[H]
		\Fn{Receive $\bg{v, version}$, processor $i$}{
			\If {$version > version_i$} {
				$val_i := v$\;
				$version_i := version$\;
			}
			send back ACK\;
		}
	\end{algorithm}
	\SetKwProg{Fn}{Procedure}{}{end}
	\begin{algorithm}[H]
	\Fn{Receive request for $\bg{val_i, version_i}$, processor $i$}{
		send back $\bg{val_i, version_i}$\;
	}
	\end{algorithm}

	\caption{Simplified pseudocode for the ABD simulation \cite{abd}.}\label{fig:abd}
\end{figure}

Consider any history $H = (E, <_E)$ of ABD.
The completion $H' = (E', <_{E'})$ of $H$ is constructed as follows.
First, discard any pending reads, as well as any pending write whose value is not returned by any read operation.
If a pending write in $H$ has its value\footnote{
	For analysis, we assume that each read is mapped to a unique write via the version number even if two writes may assign the same value $v$.}
	returned by a read, then ``append'' a matching response.
A unique coordinate in spacetime is assigned to each matching response event $e$ such that every event of $H$ causally precedes $e$.
Next, fix a total order $<_T$ that refines $<_{E'}$.
Let $H_T = (E', <_T)$, and let $S$ denote a sequential ordering of operations in $H_T$ according to the following rules.
First, arrange writes in increasing order of the corresponding versions (i.e., in program order).
Next, let $R_k$ denote the set of read operations that return the value $v$ obtained from a value-version pair $\bg{v, k}$.
Insert each read in $R_k$ after the write operation that creates version $k$ and before the next write.
Within the set $R_k$, the read operations can be ordered in an arbitrary manner consistent with	$\rightarrow_{H_T}$.
Classical analysis of the ABD algorithm shows that the constructed linearization $S$
	is a legal sequential history and a linearization of $H_T$, which implies R1-linearizability of $H$.


To deduce R2 and R3-linearizability from R1, we establish additional structural properties of the history $H$.
Intuitively, the use of majority quorums guarantees that the completion $H'$ is connected
	for any history $H$ of ABD:

\begin{lemma}\label{lem:abd:connected}
	Let $H = (E, <_E)$ be a history of ABD, and let $H'$ be its completion.
	Then any two operation executions $\ox_1$ and $\ox_2$ in $H$ that have counterparts in $H'$ are connected.
\end{lemma}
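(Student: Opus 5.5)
Each op-ex $\ox$ of ABD that survives into $H'$ has a \emph{quorum phase}: a point at which its invoking processor sends a message to all processors and then waits for replies from a majority $Q(\ox)$. For a write this is the single update phase. For a read it is the query phase. The plan is to use majority intersection between the two op-exes' quorum phases.

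A pending write in $H'$ is one whose value is returned by some read, so it has sent its update message even if it has no majority acknowledgement yet. For such a write I will take $Q(\ox)$ to be the set of processors that received the update message. Any processor that has applied the new version has received it.

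\textbf{Case 1: both op-exes are complete.} Here $|Q(\ox_1)|,|Q(\ox_2)|>N/2$, so there is a processor $q \in Q(\ox_1)\cap Q(\ox_2)$. Let $r_1$ and $r_2$ be the events in which $q$ processes the requests of $\ox_1$ and $\ox_2$. These are distinct events at the same processor, so they are totally ordered by program order; assume $r_1 <_E r_2$. Then
\[ inv(\ox_1) <_E send_1 <_E r_1 <_E r_2, \]
where $send_1$ is the send event of $\ox_1$'s request. So $inv(\ox_1)$ causally precedes $r_2$.

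To get $\ox_1 \dashrightarrow_H \ox_2$ I also need $r_2$ to precede some event of $\ox_2$. I would argue that the reply $q$ sends back to $\ox_2$ is among the majority that $\ox_2$ awaited and received. So the event in which $\ox_2$ receives that reply, which is an event of $\ox_2$, satisfies $r_2 <_E$ that receive event. I will pick $q$ inside the set of acknowledgements actually received, and two such sets are majorities, so they intersect. This gives $\ox_1 \dashrightarrow_H \ox_2$, hence the two op-exes are connected.

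\textbf{Case 2: a pending write $w$ and a complete op-ex.} Since $w$ is in $H'$, its value was returned by some read $\rho$. Because $\rho$ returned version $k$, there is a causal chain from $inv(w)$, through $w$'s update message, through some processor holding version $k$, to $\rho$'s reply. If the other op-ex is $\rho$ itself, then $w \dashrightarrow_H \rho$ directly.

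Otherwise there are two sub-cases. If the other op-ex $\ox$ is a write, it is by the same single writer, and well-formedness totally orders the writer's events, so the two writes are related by program order and are connected. If $\ox$ is a different read, I would compare it with $\rho$. If $\rho$ completed its write-back, the set of processors holding version $\ge k$ is a majority, and I intersect it with $\ox$'s query quorum. If $\rho$ saw all versions equal to $k$ in its query, then its query quorum already holds version $k$. In either case there is a majority of processors that each received (possibly via $\rho$'s write-back) a message causally after $inv(w)$. This majority intersects $Q(\ox)$ at some processor, and the program order at that processor yields either $w \dashrightarrow_H \ox$ or $\ox \dashrightarrow_H w$. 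For the second direction the processor's step for $\ox$ precedes its receipt of the version-$k$ message, which is an event of $w$ only indirectly. I expect this mixed pending/complete situation to be the main obstacle.

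The fallback is to treat processing of the update message at a replica as an event of $w$, in which case the Case 1 argument applies verbatim. I would state this convention explicitly, in line with how the op-ex's events include the steps of its messages at servers.
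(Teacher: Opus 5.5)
Your Case 1 is the paper's argument: majority intersection plus program order at the common processor. Note that an op-ex consists only of its invocation and response. Your conclusion still holds, because the receipt of $q$'s reply precedes $res(\ox_2)$ in program order. Your single-writer and $\ox=\rho$ subcases of Case 2 are also fine.

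The genuine gap is the subcase you flag yourself: a pending write $w$ against a complete read $\ox\neq\rho$. The fallback does not close it, for two reasons.
\begin{enumerate}
\item For a pending write, $Q(w)$ is only the set of processors that have received the update. This may be a minority, even just the writer, so the Case 1 intersection argument does not apply verbatim.
\item Version $k$ may reach $\ox$'s quorum only via $\rho$'s write-back, whose steps are not steps of $w$ under any convention.
\end{enumerate}

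In fact, connectedness \emph{in $H$} can simply fail here. Take $N=3$ with writer $p_1$.
\begin{itemize}
\item $w$'s update is applied only at $p_1$; the other copies are still in transit when $H$ ends.
\item A read $\ox$ by $p_3$ queries $p_2,p_3$ and completes, with no causal chain to or from $inv(w)$.
\item Afterwards, a read $\rho$ by $p_2$ queries $p_1,p_2$, obtains version $k$ from $p_1$, writes it back to $\{p_1,p_2\}$, and returns it.
\end{itemize}
Then $w$ survives into $H'$, but neither $inv(\ox) <_E inv(w)$ nor $inv(w) <_E res(\ox)$ holds. So $\ox$ and $w$ are disconnected in $H$, even if $p_1$'s local update step is counted as an event of $w$.

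The missing idea is that the lemma concerns connectedness in the completion $H'$; the paper later uses it as ``$H'$ is connected''. $H'$ is constructed so that the appended response $res(w)$ of each pending write gets a spacetime coordinate causally after \emph{every} event of $H$. Hence for any op-ex $\ox$ of $H$ we have $inv(\ox) <_{E'} res(w)$, i.e.\ $\ox \dashrightarrow_{H'} w$. This settles all of Case 2 in one line, with no quorum reasoning about $\rho$. Your Case 1 conclusions carry over from $H$ to $H'$ because $<_E \subseteq <_{E'}$.
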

\begin{proof}
	We proceed by an exhaustive case analysis on the possible pairs of operations in $H'$.

	\noindent \underline{Case~A:} $\ox_1$ and $\ox_2$ are both reads.
	Since only complete reads are included in $H'$, both operations access a majority quorum of processors.
	Since the majority quorums intersect, both operations access some common processor $p$, which establishes causality between
	some event of $\ox_1$ and some event of $\ox_2$.

	\noindent \underline{Case~B:} $\ox_1$ and $\ox_2$ are both writes.
	Since there is only a single writer, both operations are executed by the same processor.
	Once again, this establishes causality.

	\noindent \underline{Case~C:} $\ox_1$ is a read and $\ox_2$ is a write.
	If the write $\ox_2$ is complete in $H$ then both operations are complete in $H$.
	In that case both operations access a majority of processors, which establishes causality as in Case~A.
	On the other hand, if $\ox_2$ is pending, then by construction of $H'$ it follows that
	all events of $\ox_1$ causally precede the response of $\ox_2$.
	Thus, $\ox_1 \dashrightarrow_{H'} \ox_2$.
\end{proof}

To complete the proof of R3-linearizability, it suffices to show that the R1-linearization $S$
is R2-conducive and then apply \Cref{thm:r1r2r3}.
We demonstrate this first for the single-reader case in \Cref{sec:abd:swsr}, and then for the multi-reader case in \Cref{sec:abd:swmr}.

\subsection{Single-Reader Case}\label{sec:abd:swsr}
For simplicity, we first complete the analysis in the special case where only a single processor can issue read operations.\footnote{
Since $N \geq 3$ must hold for fault tolerance, the single reader assumption implies that at least one processor participates in the replication protocol
	but does not invoke operations on the implemented read/write register.}
Intuitively, we expect the ABD algorithm instantiated in this setting to satisfy R3-linearizability
	since there can only be one valid linearization for a history $H$ in any frame of reference (once the completion $H'$ is fixed)
	as the ambiguity regarding the relative order of reads in our construction of $S$ is removed.

\begin{lemma}\label{lem:abd:swsr:conducive}
	Let $H = (E, <_E)$ be a history of single-reader ABD, and let $H'$ be its completion.
	Fix a total order $<_T$ that refines $<_E$, and let $S$ be the corresponding R1-linearization.
	Then for every pair of operation executions $\ox_1$ and $\ox_2$ in $H'$, if $\ox_1$ and $\ox_2$ are weakly connected
	with $\ox_1 \dashrightarrow_{H'} \ox_2$, then $\ox_1 \rightarrow_S \ox_2$.
\end{lemma}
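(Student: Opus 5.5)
We argue by contradiction, using the same case split on operation types as in the proof of \Cref{lem:abd:connected}. Suppose $\ox_1$ and $\ox_2$ are weakly connected with $\ox_1 \dashrightarrow_{H'} \ox_2$ but $\ox_2 \rightarrow_S \ox_1$. The goal in each case is to produce a causal chain from $inv(\ox_2)$ to some event of $\ox_1$, i.e.\ $\ox_2 \dashrightarrow_{H'} \ox_1$. That would make the pair strongly connected, contradicting weak connectedness.

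\emph{Same-processor pairs.} First take two writes. Both are issued by the single writer, which is well-formed, so one of them precedes the other in program order. That program order is reflected in $\rightarrow_{H'}$, and writes are placed in $S$ in increasing version order, which is program order. So if $\ox_1 \dashrightarrow_{H'} \ox_2$, then $\ox_1$ is the earlier write. Otherwise $inv(\ox_2)$ precedes $res(\ox_2)$, which precedes $inv(\ox_1)$, giving $\ox_2 \dashrightarrow_{H'} \ox_1$, which is not weak. Hence $\ox_1 \rightarrow_S \ox_2$.

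Two reads are handled the same way. Since there is a single reader, both reads are executed by the same processor and are sequentially ordered in program order. Within each set $R_k$, reads are ordered consistently with $\rightarrow_{H_T}$. Across different sets $R_k$ and $R_{k'}$ with $k<k'$, the ordering follows from the monotonicity of versions returned by successive reads of one reader. This monotonicity comes from the write-back phase (or a unanimous majority) together with quorum intersection. So in both subcases $S$ agrees with program order, and program order agrees with the direction of $\dashrightarrow$ in a weakly connected pair.

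\emph{Mixed read/write pairs.} This is the main obstacle. Let $r$ be the read, returning version $k$, and $w$ the write of version $j$.

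Suppose first that $r \dashrightarrow_{H'} w$ and $w \rightarrow_S r$, so $j \le k$. Because $r$ returns version $k$, some processor held version $k$, so the write of version $k$ has $inv \dashrightarrow r$. The writer's program order then gives $inv(w) <_E inv(w_k)$, or $w = w_k$, and in either case $inv(w)$ causally precedes an event of $r$. Hence $w \dashrightarrow r$, so the pair is strongly connected, a contradiction.

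Suppose instead that $w \dashrightarrow_{H'} r$ and $r \rightarrow_S w$, so $k < j$. Then we must exclude $inv(r)$ causally preceding an event of $w$, so that weak connectivity with $w \dashrightarrow r$ implies $inv(r)$ does not precede $w$. There are two ways $w \dashrightarrow r$ can arise:
\begin{itemize}
\item If $inv(w)$ causally precedes an event of $r$ through messages, I would argue via quorum intersection that $r$ observes a version at least $j$, contradicting $k<j$.
\item If instead $w$ is a pending write completed in $H'$, its added response is caused by everything. Then $inv(r)$ precedes $res(w)$, so $r \dashrightarrow w$ and the pair is strongly connected.
\end{itemize}

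The delicate step is the first bullet. The causal path from $inv(w)$ into $r$ need not pass through a replica that $r$ reads after that replica received $w$'s message. I would try to show that in this configuration $r$ and $w$ both touch a common replica in both directions, again yielding strong connectivity. This is where I expect the real work, and possibly a need to refine the completion.
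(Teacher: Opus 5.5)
Your same-processor cases, your mixed case with $r \dashrightarrow_{H'} w$ (which merges the paper's Subcases~C(ii) and~C(iii)), and your treatment of a pending $w$ are all sound and essentially match the paper. The gap is the remaining case: $w$ is complete in $H$, $w \dashrightarrow_{H'} r$, and $r$ returns a version $k<j$. Your first bullet tries to derive $k \ge j$ from the fact that $inv(w)$ causally precedes an event of $r$. That inference is false, not merely delicate. For example, take $N=3$ with $w$ the first write. The reader's own replica answers $r$'s request with version~$0$, then receives and acknowledges $w$'s update, and only afterwards does $r$ collect a second version-$0$ reply and return. Then $inv(w) <_E res(r)$ by program order at the reader, yet $r$ returns version $0<j$. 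So splitting on how $w \dashrightarrow_{H'} r$ arises cannot close the argument.

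The fix is to ignore the hypothesis $w \dashrightarrow_{H'} r$ and instead prove the reverse relation $r \dashrightarrow_{H'} w$ outright, splitting on whether $w$ is complete in $H$. This is the paper's Case~D(i), and your second bullet already handles a pending $w$. If $w$ is complete, its acknowledgment quorum and $r$'s response quorum are majorities, so they share a processor $p$. Suppose $p$ had received $w$'s update before replying to $r$. Versions at $p$ never decrease, and a read returns the largest version it receives, so $r$ would return a version at least $j$. Hence $p$ replied to $r$ first, which gives $inv(r) <_E$ ($p$ replies to $r$) $<_E$ ($p$ acknowledges $w$) $<_E res(w)$. Thus $r \dashrightarrow_H w$, the pair is strongly connected, and weak connectivity is contradicted. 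This one-directional quorum-intersection argument is the ``common replica'' you were looking for. You do not need it to go ``in both directions'', and no refinement of the completion is required.
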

\begin{proof}
Let $\ox_1$ and $\ox_2$ be two op-exes in $H'$ that are weakly connected
	and satisfy $\ox_1 \dashrightarrow_{H'} \ox_2$.
To show that $\ox_1 \rightarrow_S \ox_2$, we proceed by an exhaustive case analysis on the possible pairs of operations in $H'$.

\noindent \underline{Case~A:} $\ox_1$ and $\ox_2$ are both writes.
Since there is only a single writer, both operations are executed by the same processor.
Then $\ox_1 \dashrightarrow_{H'} \ox_2$ implies that $res(\ox_1) <_E inv(\ox_2)$,
	hence $\ox_1 \rightarrow_S \ox_2$ by construction of $S$ irrespective of $<_T$.

\noindent \underline{Case~B:} $\ox_1$ and $\ox_2$ are both reads.
The analysis is analogous to Case~A since there is only one reader.

\noindent \underline{Case~C:} $\ox_1$ is a read and $\ox_2$ is a write.
We proceed by subcases on the relative position of $\ox_1$ and $\ox_2$ in the linearization order.

\noindent \underline{Subcase~(i):} $\ox_1$ returns the value assigned by a write that precedes $\ox_2$ in $S$.
Then $\ox_1 \rightarrow_S \ox_2$ by construction of $S$.

\noindent \underline{Subcase~(ii):} $\ox_1$ returns the exact value $v$ written by $\ox_2$.
Then $\ox_2 \dashrightarrow_{H'} \ox_1$ holds since $\ox_1$ must retrieve $v$ from
	some processor $p$ after $p$ receives it in a message generated by $\ox_2$.
This is a contradiction since the lemma assumes that $\ox_1$ and $\ox_2$ are weakly connected with $\ox_1 \dashrightarrow_{H'} \ox_2$.

\noindent \underline{Subcase~(iii):} $\ox_1$ returns the value $v$ assigned by a write that follows $\ox_2$ in $S$.
Let $\ox_3$ be that write.
Then $\ox_3 \dashrightarrow_{H'} \ox_1$ holds since $\ox_1$ must retrieve $v$ from
some processor $p$ after $p$ receives it in a message generated by $\ox_3$.
Furthermore, $\ox_2$ is complete and $\ox_2 \rightarrow_H \ox_3$.
This implies that $\ox_2 \dashrightarrow_{H'} \ox_1$.
We obtain a contradiction as in Subcase~(ii).

\noindent \underline{Case~D:} $\ox_1$ is a write and $\ox_2$ is a read.
We proceed by subcases on the relative position of $\ox_1$ and $\ox_2$ in the linearization order.

\noindent \underline{Subcase~(i):} $\ox_2$ returns the value $v$ assigned by a write that precedes $\ox_1$ in $S$.
If $\ox_1$ is pending in $H$ then $inv(\ox_2) <_{E'} res(\ox_1)$ by construction of $H'$.
Thus, $\ox_2 \dashrightarrow_{H'} \ox_1$ holds, which is a contradiction since the lemma assumes that $\ox_1$ and $\ox_2$
	are weakly connected with $\ox_1 \dashrightarrow_{H'} \ox_2$.
On the other hand, if $\ox_1$ is complete in $H$, then there is a processor $p$ at the intersection of the quorums
	used by $\ox_1$ and $\ox_2$ such that $\ox_2$ reads $p$'s value before $\ox_1$ overwrites it.
Then $\ox_2 \dashrightarrow_{H} \ox_1$ holds, hence $\ox_2 \dashrightarrow_{H'} \ox_1$,
	which once again leads to a contradiction.

\noindent \underline{Subcase~(ii):} $\ox_2$ returns the value written by $\ox_1$.
Then $\ox_1 \rightarrow_S \ox_2$ by construction of $S$.

\noindent \underline{Subcase~(iii):} $\ox_2$ returns the value assigned by a write that follows $\ox_1$ in $S$.
Let $\ox_3$ be that write.
Then $\ox_1 \rightarrow_S \ox_3$ and $\ox_3 \rightarrow_S \ox_2$ by construction.
This implies $\ox_1 \rightarrow_S \ox_2$.
\end{proof}

\begin{theorem}\label{thm:abd:rlin}
	Let $H = (E, <_E)$ be a history of single-reader ABD, and let $H'$ be its completion.
	Fix any total order $<_T$ that refines $<_E$, and let $S$ be the corresponding R1-linearization.
	Then $H$ is R3-linearizable, and $S$ its R3-linearization.
\end{theorem}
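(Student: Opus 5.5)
The plan is to follow the same template as the proof of \Cref{lem:raft:r3lin}: show that $S$ is an R2-conducive R1-linearization of $H$, show that $H'$ is connected, and then apply \Cref{thm:r1r2r3}.

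First, recall from the construction preceding \Cref{lem:abd:connected} that $S$ is a legal sequential history and a classic linearization of $H_T$. This holds by the classical ABD analysis. Hence $S$ is an R1-linearization of $H$.

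Next, verify the two clauses of \Cref{def:r2cond}. Clause~\ref{dr2c1} concerns weakly connected pairs $\ox_1 \dashrightarrow_{H'} \ox_2$, and it is exactly \Cref{lem:abd:swsr:conducive}. One point needs care. That lemma is phrased in terms of $\dashrightarrow_{H'}$, while \Cref{def:r2cond} uses $\dashrightarrow_H$, and weak connectivity must be evaluated consistently in both. I would reconcile this by working throughout with the completion $H'$, as was done for Raft. The only extra causal edges in $H'$ are into appended responses, and these are placed causally after every event of $H$. So the pairs with counterparts in $S$ are precisely the op-exes of $H'$, and connectivity is judged in $H'$. Clause~\ref{dr2c2} holds vacuously, because \Cref{lem:abd:connected} shows that every pair of op-exes with counterparts in $H'$ is connected, so no disconnected pair exists.

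With $S$ established as R2-conducive, \Cref{thm:r1r2r3} gives that $H$ is R2-linearizable. Since \Cref{lem:abd:connected} also gives that the history (its completion) is connected, the second part of \Cref{thm:r1r2r3} yields that $S$ is a linearization of $H_T$ for every total order $<_T$ refining $<_E$. By \Cref{def:r3}, $S$ is therefore an R3-linearization of $H$.

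The main obstacle is one remaining subtlety: $S$ was defined relative to a fixed $<_T$. In the multi-reader case, reads within $R_k$ were ordered according to $\rightarrow_{H_T}$. With a single reader, however, all reads are totally ordered by program order, which is contained in $<_E$, and writes are ordered by version. So $S$ is in fact independent of the choice of $<_T$. I would state this explicitly so that ``the corresponding R1-linearization'' is well defined and the conclusion holds for any choice of $<_T$.
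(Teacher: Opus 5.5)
Your proposal is correct and follows the paper's proof, which combines the same three results: \Cref{lem:abd:connected} for connectivity (making clause~\ref{dr2c2} vacuous), \Cref{lem:abd:swsr:conducive} for clause~\ref{dr2c1}, and then \Cref{thm:r1r2r3}. Your two extra remarks spell out what the paper leaves implicit: judging connectivity in $H'$ is exactly how both lemmas are stated (for a write $\ox'$ pending in $H$, the relation $\ox \dashrightarrow \ox'$ with a read $\ox$ is guaranteed only in $H'$, via the appended response), and the independence of $S$ from $<_T$ is the point the paper makes informally at the start of \Cref{sec:abd:swsr}.
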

\begin{proof}
The completion $H'$ is connected by \Cref{lem:abd:connected},
	and the R1-linearization $S$ is R2-conducive by \Cref{lem:abd:swsr:conducive}.
R3-linearizability with respect to $S$ follows by \Cref{thm:r1r2r3}.
\end{proof}

\subsection{Multi-Reader Case, $N=3$}\label{sec:abd:swmr}
In the presence of multiple readers, the construction of the linearization $S$ becomes more complex as the
	set $R_k$ of read operations that return the value $v$ obtained from a value-version pair $\bg{v, k}$
	is no longer totally ordered by physical causality via program order.
Since distinct observers may experience Herlihy and Wing's ``happens before'' relation differently due to relativity of simultaneity,
	one observer may perceive two operations as overlapping in time while
	another observer perceives one operation producing a response before the other is invoked.
For R3-linearizability, the operations in the set $R_k$ must therefore be ordered in a manner that respects
	Herlihy and Wing's ``happens before'' relation in \emph{every} conceivable frame of reference.

We address this problem by fixing the order of reads in $R_k$ so that
	if $\ox_1,\ox_2 \in R_k$ are weakly connected and $\ox_1 \dashrightarrow_{H'} \ox_2$ then $\ox_1 \rightarrow_S \ox_2$.
To see why this is sufficient, consider the different ways in which $\ox_1$ and $\ox_2$ can be related by causality.
If $\ox_1$ and $\ox_2$ are strongly connected, then they appear concurrent in all frames of reference, and can be ordered arbitrarily in $S$.
On the other hand, if $\ox_1$ and $\ox_2$ are only weakly connected, then each observer either perceives the two operations
	as concurrent or perceives $\ox_1$ happening before $\ox_2$ since $\ox_1 \dashrightarrow_{H'} \ox_2$.
In both cases, the ordering $\ox_1 \rightarrow_S \ox_2$ is consistent with the observer's subjective perception of time.

It remains to show that the linearization $S$ exists, which is tantamount to the statement that the $\dashrightarrow_{H'}$ relation cannot create a cycle
	when applied to weakly connected op-ex pairs in a given set $R_k$.
We present the proof below for $N=3$.
The result is nontrivial since $\dashrightarrow_{H'}$ is not a transitive relation in the general context, in contrast to Herlihy and Wing's ``happens before'' relation.

\begin{lemma}\label{lem:abd:swmr:acyclic}
	Let $H = (E, <_E)$ be a history of ABD, and let $H'$ be its completion.
	For any $k$, let $G_k$ be the directed graph whose vertices are the read operation executions in $R_k$ and
		where an edge exists from $\ox_1$ to $\ox_2$ if and only if they are weakly connected with $\ox_1 \dashrightarrow_{H'} \ox_2$.
	Then $G_k$ is acyclic for all $k$.
\end{lemma}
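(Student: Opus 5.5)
The plan is to argue by contradiction. Suppose $G_k$ contains a directed cycle $\ox_1 \to \ox_2 \to \cdots \to \ox_m \to \ox_1$ with $m \geq 2$. The case $m=2$ is immediate: an edge in each direction would make $\ox_1$ and $\ox_2$ strongly connected, but edges exist only between weakly connected pairs. So we assume $m \geq 3$.

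The first step is to turn each edge into concrete event-level information. For each edge $\ox_i \to \ox_{i+1}$ we have two facts:
\begin{itemize}
\item $\ox_i \dashrightarrow_{H'} \ox_{i+1}$, so $inv(\ox_i)$ causally precedes some event of $\ox_{i+1}$;
\item $\ox_{i+1} \dashrightarrow_{H'} \ox_i$ is false, so $inv(\ox_{i+1})$ precedes no event of $\ox_i$, in particular not $res(\ox_i)$.
\end{itemize}
All reads in $R_k$ are complete, so each one queries a majority quorum, which for $N=3$ has size at least two. Hence any two of these quorums intersect in some processor $p$. At $p$, the two reads' request-handling events are totally ordered by $p$'s program order. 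Whichever read is served first then causally precedes (via its invocation and the request message) an event of the other read. The weak connectivity of an edge therefore forces a definite orientation at every shared processor: for each processor $p$ queried by both $\ox_i$ and $\ox_{i+1}$, $\ox_i$ is served at $p$ before $\ox_{i+1}$. Otherwise $inv(\ox_{i+1})$ would precede an event of $\ox_i$ (the reply at $p$, hence $res(\ox_i)$ once that reply is used or the read ends).

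The second step uses $N=3$ to make this ordering transitive around the cycle. Label the processors $a,b,c$. Each read's quorum contains at least two of them, so each read omits at most one processor. I will assign to each edge a processor where the orientation holds, and choose these processors so that consecutive edges share one. Then the per-processor order on that processor's local timeline chains along the cycle. If some single processor $p$ lies in every quorum of the cycle, this gives $\ox_1$ served before $\ox_2$ served before $\cdots$ before $\ox_1$ at $p$, contradicting the total order of $p$'s events. If no processor is common to all, then (with three processors and quorums of size two) the quorums must take at least two distinct forms. I will argue by pigeonhole on consecutive quorums, combining two-step causal chains. Consider an edge whose quorums share $p$ and a following edge whose quorums share $q \neq p$. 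The middle read's reply at $p$ and its reply at $q$ are both causally after $inv$ of that read. Composing these, $inv(\ox_i)$ precedes an event of $\ox_{i+2}$. Continuing around the cycle, this eventually yields $inv(\ox_{j+1})$ preceding an event of $\ox_j$ for some $j$, contradicting weak connectivity. Roughly, each node's invocation is causally below its served events at every processor in its quorum, so the cycle would push some invocation below an event of its predecessor.

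The main obstacle is the non-transitivity of $\dashrightarrow$ in the second step. Knowing $inv(\ox_i) < e_{i+1}$ for some event $e_{i+1}$ of $\ox_{i+1}$ says nothing directly about $inv(\ox_{i+1})$. The fix I would try first is to strengthen the edge invariant. Instead of the abstract definition, I would maintain the statement ``$\ox_i$'s request arrives at $p$ before $\ox_{i+1}$'s request for every common $p$.'' I would then track request-arrival events, which are causally after the invocation of the read they serve. The cycle should produce a contradiction by chaining these request-arrival orders through the shared processors, using that with $N=3$ consecutive shared processors coincide often enough. If the case analysis on quorum patterns ($\{a,b\},\{b,c\},\{a,c\}$ and supersets) becomes messy, I would reduce it to a minimal cycle and argue that its length is at most $3$, then handle these few shapes exhaustively.
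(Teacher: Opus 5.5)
Your Step 1 is sound, provided ``queried by both'' means that both reads count that processor's reply in their majority; a reply arriving after $res(\ox_i)$ yields no causality. Your sub-case in which one processor lies in every quorum of the cycle is also correct. The gap is the remaining case. Write $s_x(\ox)$ for the event where replica $x$ serves $\ox$. The composition ``$inv(\ox_i)$ precedes an event of $\ox_{i+2}$'' does not follow from $s_p(\ox_i) <_E s_p(\ox_{i+1})$ and $s_q(\ox_{i+1}) <_E s_q(\ox_{i+2})$, because $s_p(\ox_{i+1})$ and $s_q(\ox_{i+1})$ are causally unrelated. Strengthening the invariant to request-arrival orders at replicas does not repair this.

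In fact no argument using only replica-side service orders can succeed. Take three reads issued by clients outside $\{a,b,c\}$, with no writes (so no write-back). Give them response quorums $\{a,b\}$, $\{b,c\}$, $\{c,a\}$, and let $s_b(\ox_1) <_E s_b(\ox_2)$, $s_c(\ox_2) <_E s_c(\ox_3)$, $s_a(\ox_3) <_E s_a(\ox_1)$. The only causal paths between distinct reads pass through these pairs. They give $inv(\ox_1) <_E res(\ox_2)$, $inv(\ox_2) <_E res(\ox_3)$, $inv(\ox_3) <_E res(\ox_1)$, and none of the reverse relations. So $\ox_1,\ox_2,\ox_3$ form a cycle of weakly connected pairs. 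Every shared processor is oriented exactly as your Step 1 requires, and no processor lies in all three quorums.

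What excludes this in the lemma is that the readers are themselves the $N=3$ processors. Program order at a reader therefore interleaves its own invocation and response with the service events it performs for others. Your proposal never uses who executes the reads, and the paper exploits this twice.

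\begin{itemize}
\item First, in a minimal cycle the reads run on distinct processors. If $\ox_i$ precedes $\ox_j$ in one processor's program order, the predecessor $\ox_h$ of $\ox_i$ on the cycle satisfies $\ox_h \dashrightarrow_{H'} \ox_j$ but not the converse, so the cycle can be shortcut. With $N=3$ this forces length exactly $3$, a step your plan leaves unsupported.
\item Second, let $p_i$ execute $\ox_i$. A common replica of $\ox_1$ and $\ox_2$ gives $inv(\ox_1) <_E res(\ox_2)$, and $\ox_3$'s quorum contains $p_1$ or $p_2$. If it is $p_1$, weak connectivity forces $\ox_3$'s service event there before $inv(\ox_1)$, so $inv(\ox_3) <_E res(\ox_2)$. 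If it is $p_2$, it forces that event after $res(\ox_2)$, so $inv(\ox_1) <_E res(\ox_3)$. Either contradicts weak connectivity.
\end{itemize}

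In the example above, placing the three readers on $c$, $a$, $b$ would force $res(\ox_1) <_E s_c(\ox_2)$. Hence $\ox_1 \dashrightarrow_{H'} \ox_3$, which destroys the cycle.
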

\begin{proof}
Suppose for contradiction that for some $k$ the graph $G_k$ has a cycle $C$.
The cycle must have at least three edges since an operation execution cannot be weakly connected to itself (in a one-cycle or loop),
	and similarly two op-exes cannot be weakly connected to each other.
Let $\ox_1$, $\ox_2$, and $\ox_3$ be three consecutive read operation executions along the cycle such that
	$\ox_1 \dashrightarrow_{H'} \ox_2$ and $\ox_2 \dashrightarrow_{H'} \ox_3$,
	and each consecutive pair is weakly connected.

Without loss of generality, assume that $C$ is a cycle of minimal length in $G_k$.
Next, note that the operations in $C$ must be executed by distinct processors,
	as otherwise $C$ can be shortened.
That is, if $\ox_i$ and $\ox_j$ are executed by the same processor and $\ox_i$ precedes $\ox_j$ in program order,
	then letting $\ox_h$ denote the predecessor of $\ox_i$ along $C$,
	$\ox_h$ and $\ox_j$ are weakly connected with $\ox_h \dashrightarrow_{H'} \ox_j$,
	which makes it possible to remove $\ox_i$ (and any op-exes strictly between $\ox_i$ and $\ox_j$) from the cycle.
To see this, first note that because $\ox_h \dashrightarrow_{H'} \ox_i$ and $\ox_i \rightarrow_{H'} \ox_j$,
	$\ox_h \dashrightarrow_{H'} \ox_j$ holds.
Furthermore, $\ox_j \dashrightarrow_{H'} \ox_h$ is false, as otherwise
	$\ox_i \rightarrow_{H'} \ox_j$ implies $\ox_i \dashrightarrow_{H'} \ox_h$,
	which contradicts $\ox_h$ and $\ox_i$ being weakly connected with $\ox_h \dashrightarrow_{H'} \ox_i$.

Without loss of generality, suppose that processes are numbered so that $p_i$ executes $\ox_i$ for $i \in \bc{1, 2, 3}$.
Since we have shown that $C$ has length at least three, and that all operations in the cycle are
	executed by distinct processors, it follows from the assumption $N=3$ that
	$C$ has length exactly three.
Consequently, $\ox_3$ and $\ox_1$ are weakly connected with $\ox_3 \dashrightarrow_{H'} \ox_1$.

Now consider the causal relationships among $\ox_1$, $\ox_2$, and $\ox_3$.
Due to the use of majority quorums in ABD, $\ox_1$ and $\ox_2$ must both fetch value-version pairs
	from a common processor $p$.
Let $e^p_1$ be the event where $p$ receives the value request sent by $\ox_1$,
	and let $e^p_2$ be the event where $p$ receives the value request sent by $\ox_2$.
Since $p$ is sequential, $e^p_1$ and $e^p_2$ are related by $<_E$ via program order.
Moreover, it follows that $e^p_1 <_E e^p_2$
	as otherwise there is a chain of causality $inv(\ox_2) <_E e^p_2 <_E e^p_1 <_E res(\ox_1)$ that contradicts
	$\ox_1$ and $\ox_2$ being weakly connected with $\ox_1 \dashrightarrow_{H'} \ox_2$.
Thus, $inv(\ox_1) <_E e^p_1 <_E e^p_2 <_E res(\ox_2)$ holds.

Next, consider the action of $\ox_3$.
Due to the use of majority quorums in ABD, $\ox_3$ must fetch a value-version pair from at least one other processor $q$.
Since $N=3$, this processor $q$ must be $p_1$ or $p_2$.
Let $e^q_1$ be the event where $q$ receives the value request sent by $\ox_3$,
and let $e^q_2$ be the event where $q$ sends the response received by $\ox_3$.
Observe that $inv(\ox_3) <_E e^q_1 <_E e^q_2 <_E res(\ox_3)$ holds.

\noindent\underline{Case A:} $q = p_1$, hence $e^q_1$ and $e^q_2$ are both related by program order to $\ox_1$.
It follows that $e^q_2 <_E inv(\ox_1)$ holds, as otherwise $inv(\ox_1) <_E e^q_2 <_E res(\ox_3)$,
	which contradicts $\ox_1$ and $\ox_3$ being weakly connected with $\ox_3 \dashrightarrow_{H'} \ox_1$.
Thus, $inv(\ox_3) <_E e^q_1 <_E e^q_2 <_E inv(\ox_1)$ holds.
Since we showed earlier that $inv(\ox_1) <_E e^p_1 <_E e^p_2 <_E res(\ox_2)$ holds,
	it follows from the transitivity of $<_E$ that $inv(\ox_3) <_E res(\ox_2)$.
This contradicts $\ox_2$ and $\ox_3$ being weakly connected with $\ox_2 \dashrightarrow_{H'} \ox_3$.

\noindent\underline{Case B:} $q = p_2$, hence $e^q_1$ and $e^q_2$ are both related by program order to $\ox_2$.
It follows that $res(\ox_2) <_E e^q_2$ holds, as otherwise $inv(\ox_3) <_E e^q_1 <_E e^q_2 <_E res(\ox_2)$,
	which contradicts $\ox_2$ and $\ox_3$ being weakly connected with $\ox_2 \dashrightarrow_{H'} \ox_3$.
Thus, $res(\ox_2) <_E e^q_2 <_E res(\ox_3)$ holds.
Since we showed earlier that $inv(\ox_1) <_E e^p_1 <_E e^p_2 <_E res(\ox_2)$ holds,
	it follows from the transitivity of $<_E$ that $inv(\ox_1) <_E res(\ox_3)$.
This contradicts $\ox_1$ and $\ox_3$ being weakly connected with $\ox_3 \dashrightarrow_{H'} \ox_1$.
\end{proof}

We now complete the analysis of R3-linearizability, noting that \Cref{lem:abd:swmr:acyclic}
	implies the existence of the R1-linearization $S$ for the completion $H'$ of $H$.

\begin{lemma}\label{lem:abd:swmr:conducive}
	Let $H = (E, <_E)$ be a history of ABD for $N=3$ processors, and let $H'$ be its completion.
	Fix a total order $<_T$ that refines $<_E$, and let $S$ be the corresponding R1-linearization.
	Then for every pair of operation executions $\ox_1$ and $\ox_2$ in $H'$, if $\ox_1$ and $\ox_2$ are weakly connected
	with $\ox_1 \dashrightarrow_{H'} \ox_2$, then $\ox_1 \rightarrow_S \ox_2$.
\end{lemma}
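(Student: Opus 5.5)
The plan is to reuse the case analysis from the proof of \Cref{lem:abd:swsr:conducive}, and to replace the single-reader argument with the ordering of reads within each $R_k$ supplied by \Cref{lem:abd:swmr:acyclic}. First I make precise how $S$ is built when there are multiple readers. Writes are arranged in version order. Reads in $R_k$ are placed between the write creating version $k$ and the next write. Within $R_k$, the reads follow a topological order of $G_k$ that also extends $\rightarrow_{H_T}$ restricted to $R_k$.

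Before using this ordering I must check that it exists, i.e.\ that the union of the edges of $G_k$ with $\rightarrow_{H_T}$ on $R_k$ is acyclic. I would argue that if $\ox \rightarrow_{H_T} \ox'$, then $\ox' \dashrightarrow_{H'} \ox$ is false. The reason is that $\rightarrow_{H_T}$ only relates pairs whose causal relationship is not contradicted. More precisely, if $\ox'\dashrightarrow_{H'}\ox$, some event of $\ox$ is after $inv(\ox')$ in $<_E$, hence in $<_T$, which is incompatible with $res(\ox) <_T inv(\ox')$ only when that event is $res(\ox)$ or earlier. Since every event of $\ox$ precedes or equals $res(\ox)$, this is indeed a contradiction.

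So any two reads related by $\rightarrow_{H_T}$ are either weakly connected in the same direction or strongly connected. The combined relation is therefore contained in a relation whose cycles reduce to cycles of $G_k$ plus chains of $\rightarrow_{H_T}$. I expect this to be the main obstacle, because mixed cycles need an argument analogous to the minimal-cycle shortening in \Cref{lem:abd:swmr:acyclic}. The fallback is to note that for R2-conduciveness we only need consistency with $G_k$. Legality and L4 then hold in $H_T$ because $\rightarrow_{H_T}$-edges between weakly connected reads already agree with $G_k$, and strongly connected reads are concurrent in every frame.

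With $S$ fixed, I would go through the cases of \Cref{lem:abd:swsr:conducive}.

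\textbf{Case A (two writes).} This is unchanged, since there is a single writer.

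\textbf{Case C (read then write) and Case D (write then read).} These are unchanged, since their subcase arguments used only the quorum intersection, the causal origin of returned values, and the construction of $H'$. None of these relied on there being a single reader.

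\textbf{Case B (two reads).} Suppose $\ox_1$ returns version $k_1$ and $\ox_2$ returns version $k_2$.
\begin{itemize}
\item If $k_1 < k_2$, then $\ox_1 \rightarrow_S \ox_2$ holds directly by construction.
\item If $k_1 = k_2$, both reads lie in $R_k$, and $G_k$ contains the edge from $\ox_1$ to $\ox_2$. The topological order therefore gives $\ox_1 \rightarrow_S \ox_2$.
\item If $k_1 > k_2$, I derive a contradiction. Some processor in the intersection of the two read quorums is read by $\ox_2$ and holds a version below $k_1$. Also, $\ox_1$ finished its write-back (or found a unanimous majority at $k_1$) before returning.
\end{itemize}

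For the last subcase, I consider the common processor $p$ in the quorums of the write-back or unanimous phase of $\ox_1$ and the read phase of $\ox_2$. If $p$ answered $\ox_2$ after receiving version $k_1$, then $\ox_2$ would return at least $k_1$. Hence $p$ answered $\ox_2$ before adopting $k_1$ on behalf of $\ox_1$. That gives $\ox_2 \dashrightarrow_{H'} \ox_1$, which contradicts weak connectivity.

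Combining the cases gives the lemma.
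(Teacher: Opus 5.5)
Your proposal is correct and follows the paper's route: Cases A, C and D carry over verbatim from the single-reader proof, and two reads are handled by the $G_k$-based ordering of $R_k$. Your explicit treatment of reads returning different versions fills in a step that the paper's one-line Case~B leaves implicit; the case $k_1<k_2$ is immediate from the construction, and the case $k_1>k_2$ leads to a contradiction by quorum intersection.

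The ``main obstacle'' you flag about the existence of $S$ does not arise. Your own argument shows that $\ox \rightarrow_{H_T} \ox'$ excludes $\ox' \dashrightarrow_{H'} \ox$, so such reads are not strongly connected, contrary to what you wrote. \Cref{lem:abd:connected} excludes disconnection, so every $\rightarrow_{H_T}$-pair in $R_k$ is already an edge of $G_k$, and any topological order of $G_k$ suffices.
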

\begin{proof}
	We proceed by an exhaustive case analysis on the possible pairs of operations in $H'$.
	The case analysis is identical to the proof of \Cref{lem:abd:swsr:conducive},
	except for the proof in Case~B, where $\ox_1$ and $\ox_2$ are both reads.
	Here we appeal to the special way in which $S$ is constructed for the multi-reader ABD algorithm
	based on the $\dashrightarrow_{H'}$ relation, whereby $\ox_1 \dashrightarrow_{H'} \ox_2$ implies $\ox_1 \rightarrow_S \ox_2$
		for weakly connected operation execution pairs.
\end{proof}

\begin{theorem}\label{thm:abd:swmr:rlin}
	Let $H = (E, <_E)$ be a history of multi-reader ABD for $N=3$ processors, and let $H'$ be its completion.
	Fix any total order $<_T$ that refines $<_E$, and let $S$ be the corresponding R1-linearization.
	Then $H$ is R3-linearizable, and $S$ its R3-linearization.
\end{theorem}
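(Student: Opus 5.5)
The plan mirrors the proof of \Cref{thm:abd:rlin}: establish connectedness of the completion and R2-conduciveness of $S$, then invoke \Cref{thm:r1r2r3}.

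The first step is to confirm that $S$ is a well-defined R1-linearization in the multi-reader setting. Writes are ordered by version, and each set $R_k$ is inserted between the write creating version $k$ and the next write. Within $R_k$, the reads are ordered by a topological sort of the graph $G_k$. By \Cref{lem:abd:swmr:acyclic}, $G_k$ is acyclic, so such a sort exists. Any topological sort should also respect $\rightarrow_{H_T}$ within $R_k$: if $\ox \rightarrow_{H_T} \ox'$, then $res(\ox) <_E inv(\ox')$ in the underlying causal order. Here I would take the within-$R_k$ order to be built from $<_E$-causality rather than $<_T$, so the argument is frame-independent. This gives $\ox \dashrightarrow_{H'} \ox'$ and rules out $\ox' \dashrightarrow_{H'} \ox$ (otherwise a causal cycle arises), so the pair is a weakly connected edge of $G_k$. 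Legality and compatibility with real-time order across different $R_k$'s come from the classical ABD analysis cited before \Cref{lem:abd:connected}. Hence $S$ is an R1-linearization.

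The second step verifies the two clauses of \Cref{def:r2cond}. Clause~(\ref{dr2c1}) is exactly \Cref{lem:abd:swmr:conducive}. Clause~(\ref{dr2c2}) holds vacuously, because \Cref{lem:abd:connected} shows that every pair of op-exes with counterparts in $H'$ is connected, so no disconnected pair exists. Thus $S$ is R2-conducive. Since $H'$ is connected, \Cref{thm:r1r2r3} gives that $S$ is an R2-linearization in every frame of reference, so $H$ is R3-linearizable with R3-linearization $S$.

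The main obstacle is the subtle point that $S$ must not depend on the chosen $<_T$, while the statement fixes a $<_T$ only to name $S$. I would handle this by noting that the write order, the membership of each $R_k$, and the graph $G_k$ are all defined from $<_{E'}$ alone. Any $<_T$-dependent tie-breaking within $R_k$ therefore only chooses among topological sorts of $G_k$, each of which satisfies the hypotheses of \Cref{thm:r1r2r3} by the argument above.
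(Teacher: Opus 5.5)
Your overall route is the paper's, and your use of \Cref{lem:abd:swmr:acyclic} to obtain $S$ as a topological sort of each $G_k$ matches the paper's remark preceding \Cref{lem:abd:swmr:conducive}. The chain is:
\begin{itemize}
\item connectedness of $H'$ from \Cref{lem:abd:connected};
\item clause~(\ref{dr2c1}) of R2-conduciveness from \Cref{lem:abd:swmr:conducive}, with clause~(\ref{dr2c2}) vacuous by connectedness;
\item then \Cref{thm:r1r2r3}.
\end{itemize}

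However, one inference in your first step is false as written. $\ox \rightarrow_{H_T} \ox'$ only yields $res(\ox) <_T inv(\ox')$, not $res(\ox) <_E inv(\ox')$. A refinement $<_T$ may order causally unrelated (spacelike-separated) events, which is exactly the relativistic effect under study. Saying you would build the order ``from $<_E$-causality'' does not remove the obligation, because $S$ must be a linearization of this particular $H_T$.

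The repair uses connectedness. Since $res(\ox) <_T inv(\ox')$ and $<_{E'} \subseteq <_T$, the relation $\ox' \dashrightarrow_{H'} \ox$ is ruled out: it would force $inv(\ox') <_{E'} res(\ox)$, hence $inv(\ox') <_T res(\ox)$. Then \Cref{lem:abd:connected} gives $\ox \dashrightarrow_{H'} \ox'$, so $\ox$ and $\ox'$ are weakly connected and form an edge of $G_k$. Thus every topological sort of $G_k$ respects $\rightarrow_{H_T}$ within $R_k$, for every choice of $<_T$. With this fix your argument is complete and agrees with the paper's.
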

\begin{proof}
	The completion $H'$ is connected by \Cref{lem:abd:connected},
	and the R1-linearization $S$ is R2-conducive by \Cref{lem:abd:swmr:conducive}.
	R3-linearizability with respect to $S$ follows by \Cref{thm:r1r2r3}.
\end{proof}

\section{Quorum-Replicated Key-Value Stores} \label{sec:kv}

Gilbert and Golab's conjecture covers key-value storage systems that use majority quorums for replication,
	and they cite as an example Amazon’s Dynamo key-value store \cite{dynamo:sosp}, which has a number
	of widely-used derivatives including Apache Cassandra \cite{cassandra}.
Such systems store collections of key-value pairs, and provide get and put operations to read and write the value of a given key.
Each key-value pair can be modelled as a shared read/write register object, with the size of a read or write quorum
	tunable individually for each get or put operation.

To a first approximation, if the key-value protocol is configured with majority quorums then the
	protocol behaves similarly to the ABD simulation described in \Cref{sec:abd},
	though it uses only a single round of communication for both gets and puts.\footnote{The majority quorums must be strict,
	as opposed to ``sloppy'' quorums with hinted hand-off used to maintain full write availability under network partitions.}
The shared object corresponding to each key-value pair therefore satisfies the semantics of Lamport's \emph{regular} register \cite{lam:ipc1}.

A closer look at Dynamo reveals that the question of linearizability is more nuanced.
Get operations sometimes return a collection of conflicting data
	versions, and therefore cannot be mapped directly to read operations on a register object.
Some conflicts are resolved automatically using Dynamo's vector clock (syntactic reconciliation),
	but the version tree occasionally forks into parallel branches that must be merged
	using application-specific business logic (semantic reconciliation), such as a ``last write wins''
	rule over uniquely timestamped versions.
Applications with stateful sessions can further enforce \emph{monotonic reads}, avoiding old-new inversions within a session.
Thus, the exact implementation of a read/write register from Dynamo get/put operations depends
	on application-side session management and reconciliation policies,
	leaving Gilbert and Golab's conjecture somewhat open to interpretation.



We believe that a Dynamo-style key-value store configured with strict majority quorums
	and the ``last write wins'' reconciliation policy can yield R3-linearizable histories in special cases.
Assume a single reader maintaining a continuous session, which ensures monotonicity across all reads,
	and a single writer who assigns data timestamps using a local clock, which avoids clock synchronization issues.
%
As in our analysis of ABD in \Cref{sec:abd}, majority quorums guarantee connected histories, and
	under the single-writer single-reader assumption the histories have R2-conducive R1-linearizations,
	allowing the application of \Cref{thm:r1r2r3}.

Modern implementations of the Dynamo design also provide mechanisms for strong consistency beyond session guarantees.
Notably, recent versions of Apache Cassandra \cite{cassandra} implement lightweight transactions
	using Paxos consensus \cite{Lamport:paxos}, yielding classically linearizable behaviour for multiple writers and readers;
	by arguments similar to our Raft analysis in \Cref{sec:raft}, we believe such systems are R3-linearizable.
\section{Conclusion} \label{sec:conc}
Our analysis of relativistic linearizability in this paper validates Gilbert and Golab's proof technique \cite{gilgol},
	and partly confirms their conjecture with respect to state machine replication, the ABD shared register construction,
	and quorum-replicated key-value stores.
We establish R3-linearizability for all three algorithms, albeit only in special cases for ABD and key-value stores.
To our knowledge, these are the first rigorous proofs of R3-linearizability for any distributed algorithm,
	and the first applications of Gilbert and Golab's proof technique.
Jayanti's recent result \cite{jayanti} that all classically linearizable \emph{asynchronous} algorithms are R2-linearizable
	complements ours but does not subsume our R3-linearizability proofs, a strictly stronger property.
We leave open a broader analysis of the Gilbert-Golab conjecture across more interpretations and special cases
	(e.g., multi-reader ABD for $N > 3$), as well as a systematic study of relativistic linearizability
	for algorithms that break the asynchronous assumption by using physical clocks for coordination.

\begin{acks}
	We acknowledge the support of the Natural Sciences and Engineering Research Council of Canada (NSERC).
\end{acks}

\vfill
\break

\bibliographystyle{ACM-Reference-Format}
\bibliography{refs}

\end{document}